\newtheorem{theorem}{Theorem}
\newtheorem{remark}{Remark}
\newcommand{\R}{\mathbb{R}}
\newcommand{\V}[1]{\R^#1}
\newcommand{\Vp}[1]{\R^#1_{\ge 0}}  
\newcommand{\M}[2]{\R^{#1 \times #2}}  
\newcommand{\itv}[2]{\langle #1,#2 \rangle}  
\newcommand{\norm}[1]{\|#1\|}
\newcommand{\abs}[1]{|#1|}
\newcommand{\inv}[1]{#1^{-1}}
\newcommand{\red}[1]{{#1}}
\begin{document}

\title{Error evaluation of partial scattering functions obtained from contrast variation small-angle neutron scattering}   
\shorttitle{Short Title}          

\cauthor[a]{Koichi}{Mayumi}{kmayumi@issp.u-tokyo.ac.jp}{5-1-5 Kashiwanoha, Kashiwa-Shi, Chiba, 277–8581, Japan}
\author[a]{Tatsuro}{Oda}
\author[b]{Shinya}{Miyajima}
\author[c]{Ippei}{Obayashi}
\author[d]{Kazuaki}{Tanaka}
\aff[a]{The Institute for Solid State Physics, The University of Tokyo, 5-1-5 Kashiwanoha, Kashiwa-Shi, Chiba, 277–8581, Japan}
\aff[b]{Faculty of Science and Engineering, Iwate University}
\aff[c]{Center for Artificial Intelligence and Mathematical Data Science, Okayama University}
\aff[d]{Global Center for Science and Engineering, Waseda University}

\shortauthor{Author and Co-author Names}

\keyword{Small-angle neutron scattering} \keyword{Contrast variation} \keyword{Error evaluation} 

\maketitle                        

\begin{abstract}
Contrast variation small-angle neutron scattering (CV-SANS) is a powerful tool to evaluate the structure of multi-component systems by decomposing scattering intensities  \( I\) measured with different scattering contrasts into partial scattering functions  \( S\) of self- and cross-correlations between components. The measured \( I\) contains a measurement error, \(\Delta I\), and \(\Delta I\) results in an uncertainty of partial scattering functions, \(\Delta S\). However, the error propagation from \(\Delta I\) to  \(\Delta S\) has not been quantitatively clarified. In this work, we have established deterministic and statistical approaches to determine  \(\Delta S\) from  \(\Delta I\). We have applied the two methods to (i) computational data of a core-shell sphere and experimental CV-SANS data of (ii) clay/polyethylene glycol (PEG) aqueous solutions and (iii) polyrotaxane solutions, and have successfully estimated the errors of  \(S\). The quantitative error estimation of \(S\) offers us a strategy to optimize the combination of scattering contrasts to minimize error propagation.
\end{abstract}

\section{Introduction}
Contrast variation small-angle neutron scattering (CV-SANS) has been utilized to study the nano-structure of multicomponent systems, such as organic/inorganic composite materials~\cite{NCgel, filled_rubber, CaCo3}, self-assembled systems of amphipathic molecules~\cite{blockcopolymer, selfassemble}, complexes of biomolecules ~\cite{protein, biological_membrane }, and supramolecular systems~\cite{CVSANS_PR, model_PR}. For isotropic materials, 2-D SANS data is converted to 1-D scattering function \( I(Q) \), where \( Q \) is the magnitude of the scattering vector. In the case of \( p \)-components systems, the scattering function is a sum of partial scattering functions, \( S_{ij}(Q) \)~\cite{CVSANS}:
\begin{align}\label{eq:I_Q_1}
    I(Q) = \sum_{i=1}^{p} \rho_{i}^{2} S_{ii}(Q) + 2 \sum_{i<j} \rho_{i} \rho_{j} S_{ij}(Q)
\end{align}
where \(\rho_i\) is the scattering length density of the \(i\)th component, \(S_{ii}(Q)\) is a self-term corresponding to the structure of the \(i\)th component, and \(S_{ij}(Q)\) is a cross-term originated from the correlation between the \(i\)th component and \(j\)th component. On the assumption of incompressibility, Eq.~\eqref{eq:I_Q_1} can be reduced to the following equation~\cite{CVSANS}:
\begin{align}\label{eq:I_Q_2}
    I(Q) = \sum_{i=1}^{p-1} (\rho_{i} - \rho_{p})^{2} S_{ii}(Q) + 2 \sum_{i<j} (\rho_{i} - \rho_{p})(\rho_{j} - \rho_{p}) S_{ij}(Q).
\end{align}
For 3-components systems, in which two solutes (\(i=1, 2\)) are dissolved in a solvent (\(i=3\)), the scattering function is given as below:

\begin{align}
\label{3component}
I(Q) &= (\rho_1 - \rho_3)^2 S_{11}(Q) + (\rho_2 - \rho_3)^2 S_{22}(Q) + 2(\rho_1 - \rho_3)(\rho_2 - \rho_3)S_{12}(Q) \nonumber\\
&= \Delta\rho_1^2 S_{11}(Q) + \Delta\rho_2^2 S_{22}(Q) + 2\Delta\rho_1\Delta\rho_2 S_{12}(Q)
\end{align}

Here, \(\Delta \rho_i\) is the scattering length density difference between the \(i\)th solute and solvent. Based on Eq.~\eqref{3component}, by measuring \(I(Q)\) of \(N\) samples (\(N \geq 3\)) with different scattering contrasts (\(\Delta \rho_1\) and \(\Delta \rho_2\)), it is possible to determine the three partial scattering functions, \(S_{11}(Q)\), \(S_{22}(Q)\), and \(S_{12}(Q)\):
\begin{align}
\label{SVSANSmodel}
\begin{bmatrix}
I_1 \\
\vdots \\
I_N
\end{bmatrix}
=
\begin{bmatrix}
\Delta_1\rho_1^2 & \Delta_1\rho_2^2 & 2\Delta_1\rho_1\Delta_1\rho_2 \\
\vdots & \vdots & \vdots \\
\Delta_N\rho_1^2 & \Delta_N\rho_2^2 & 2\Delta_N\rho_1\Delta_N\rho_2
\end{bmatrix}
\begin{bmatrix}
S_{11} \\
S_{22} \\
S_{12}
\end{bmatrix}
\end{align}
From the calculated partial scattering functions, we can analyze the structure of each solute and cross-correlation between the two solutes. 
Despite the usefulness of CV-SANS, its application has been limited due to the complexity and uncertainty of the calculation. The experimentally obtained \( I(Q) \) has a statistical error \( \Delta I \), and therefore we should consider how \( \Delta I \) propagates to the error of \( S(Q) \):
\begin{align}\label{eq:LSE}  
\begin{bmatrix}
I_1 + \Delta I_1 \\
\vdots \\
I_N + \Delta I_N
\end{bmatrix}
=
\begin{bmatrix}
\Delta_1\rho_1^2 & \Delta_1\rho_2^2 & 2\Delta_1\rho_1\Delta_1\rho_2 \\
\vdots & \vdots & \vdots \\
\Delta_N\rho_1^2 & \Delta_N\rho_2^2 & 2\Delta_N\rho_1\Delta_N\rho_2
\end{bmatrix}
\begin{bmatrix}
S_{11} + \Delta S_{11} \\
S_{22} + \Delta S_{22} \\
S_{12} + \Delta S_{12}
\end{bmatrix}
\end{align}
However, as far as we know, the relationship between \(\Delta I\) and \(\Delta S\) has not been clarified. The contribution of this study is an estimation of the transition from \(\Delta I\) to \(\Delta S\). To achieve this, we adopted two approaches: deterministic and statistical error estimations.

The objective of deterministic error estimation is to analytically estimate the upper bounds of $|\Delta S_{11}|$, $|\Delta S_{12}|$, and $|\Delta S_{22}|$ in Eq. \eqref{eq:LSE}. This essentially amounts to quantitatively clarifying the sensitivity of \(S\) to the variations in \(I\). It is essential to recognize this relationship, because it has a direct impact on the precision required for observing \(I\). 
Theoretically, solving a least square problem is equivalent to multiplying its right-hand side vector by the Moore-Penrose inverse (e.g., \cite{GV}), which is a kind of generalized inverse of its coefficient matrix. 
Therefore, the Moore-Penrose inverse plays a vital role in this analysis, enabling a detailed examination of the impact of each input variable. This sheds light on the complex pathways of error propagation within a system. 
Such an error estimation has already been done in the contexts of verified numerical computation (see \cite{M}, e.g.), and its effectiveness has been confirmed. 

The objective of statistical error estimation is to establish the probabilistic description of \(\Delta S\) from data under some statistical assumptions.
Statisticians have long considered the problem of error estimation, such as interval estimation~\cite{intro_stat} and Bayesian statistics~\cite{bayesian_stat}. Recently, this field has been referred to as uncertainty quantification~\cite{uqtextbook} and has been widely studied.
This study applies basic Bayesian inference with a noninformative prior distribution to estimate \(\Delta S\).
The estimation assumes that the error in \(I\), namely \(\Delta I\), follows a normal distribution.
The framework is also used to examine the robustness of the estimation.


These approaches effectively capture the inherent uncertainties in the observational errors. The error bounds are accurately derived from the mathematical structure of Eq. \eqref{SVSANSmodel}, thereby offering significant insights into the reliability and precision of the computational results for the partial scattering functions. In this study, we applied our error estimation methods to (i) computational data of a core-shell sphere and experimental CV-SANS data of (ii) clay/polyethylene glycol (PEG) aqueous solutions and (iii) polyrotaxane solutions~\cite{CVSANS_PR}. Polyrotaxane is a topological supramolecular assembly, in which ring molecules are threaded onto a linear polymer chain. This study demonstrates the effectiveness of our method in quantifying uncertainties arising from the randomness of observational errors. 
In both error estimation approaches, the condition number of the coefficient matrix is a useful tool to evaluate the degree of error propagation (see Sections~\ref{sec:DEE} and \ref{sec:statistical} for detail).


\section{Methods}\label{sec:methods}

\subsection{Deterministic error estimation}\label{sec:DEE}
Despite the usefulness of CV-SANS, its application has been limited due to the statistical error $\Delta I$ associated with the experimentally obtained $I(Q)$. It is not well-studied how $\Delta I$ propagates to the error of $S(Q)$. To address this issue, this subsection presents a theory that clarifies how $\Delta I$ propagates to the error of $S(Q)$, $\Delta S$,  in a deterministic sense.  
The theory in this subsection is based on error analyses in numerical computations \cite{H} .

Define 
\begin{equation}\label{eq:notation-IAS}
\begin{aligned}
I &:=
\begin{bmatrix}
I_1 \\
\vdots \\
I_N 
\end{bmatrix}, \quad 
\Delta I := 
\begin{bmatrix}
\Delta I_1 \\
\vdots \\
\Delta I_N
\end{bmatrix}, \quad 
A := 
\begin{bmatrix}
\Delta_1\rho_1^2 & \Delta_1\rho_2^2 & 2\Delta_1\rho_1\Delta_1\rho_2 \\
\vdots & \vdots & \vdots \\
\Delta_N\rho_1^2 & \Delta_N\rho_2^2 & 2\Delta_N\rho_1\Delta_N\rho_2
\end{bmatrix}, \\ 
S &:= 
\begin{bmatrix}
S_{11} \\
S_{22} \\
S_{12} 
\end{bmatrix}, \quad 
\Delta S := 
\begin{bmatrix}
\Delta S_{11} \\
\Delta S_{22} \\
\Delta S_{12}
\end{bmatrix}. 
\end{aligned}
\end{equation}
Then, (\ref{eq:LSE}) can be written as 
\begin{equation}
I + \Delta I = A(S + \Delta S).     
\end{equation}
It is obvious that $S$ and $\Delta S$ are 3-dimensional vectors, and $A$ is an $N \times 3$ matrix. 
Sections~\ref{sec:DEE} and \ref{sec:statistical} generalize Eq. (\ref{eq:LSE}), and treat the case where $S$ and $\Delta S$ are $m$-dimensional vectors, and $A$ is a $N \times m$ matrix. 
To this end, we introduce notations used in Sections~\ref{sec:DEE} and \ref{sec:statistical}. 
For $v \in \V{n}$, let $v_i$ for $i = 1,\dots,n$ be the $i$-th component of $v$, and $\norm{v} := \sqrt{v_1^2 + \cdots +v_n^2}$. 
For $v,w \in \V{n}$, the inequality $v \le w$ means that $v_i \le w_i$ holds for all $i$. 
Let $\Vp{n} := \{v \in \V{n}\  |\  v \ge 0\}$. 
For $B \in \M{m}{n}$, let $b_{ij}$ and $\norm{B}$ be the $(i,j)$ element and 2-norm (see \cite{GV}, e.g.) of $B$, respectively. 
Suppose $v \le w$, and define 
$$
[v,w] := \left(\begin{array}{c}
[v_1,w_1] \\
\vdots \\
{[}v_n,w_n{]} \\
\end{array}\right), \quad 
\abs{v} := \left(\begin{array}{c}
\abs{v_1} \\
\vdots \\
\abs{v_n}
\end{array}\right), \quad \abs{B} := \left(\begin{array}{ccc}
\abs{b_{11}} & \cdots & \abs{b_{1n}} \\
\vdots & \ddots & \vdots \\
\abs{b_{n1}} & \cdots & \abs{b_{nn}} \\
\end{array}\right). 
$$
Denote the Moore-Penrose inverse of $B$ by $B^+$. 
When $m \ge n$ and $B$ has full column rank in particular, we have $B^+ = \inv{(B^TB)}B^T$, where $B^T$ denotes the transpose of $B$. 

We present Theorem~\ref{th:AE} for clarifying how $\Delta I$ propagates to $\Delta S$ in a deterministic sense. 
See Appendix \ref{sec:proofs} for its proof.
Theorem~\ref{th:AE} says that we can analytically estimate an upper bound on $\abs{\Delta S}$.  
\begin{theorem}\label{th:AE}  
Let $A \in \M{N}{m}$, $I,\Delta I \in \V{N}$, $\Delta J \in \Vp{N}$, $S,\Delta S \in \V{m}$, and $\Delta T := \abs{A^+}\Delta J$. 
Suppose that $N \ge m$, $AS = I$, $I + \Delta I = A(S + \Delta S)$, $\abs{\Delta I} \le \Delta J$, 
and $A$ has full column rank. 
It then follows that 
\begin{equation}\label{eq:AE}
\abs{\Delta S} \le \Delta T. 
\end{equation}
\end{theorem}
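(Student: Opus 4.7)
The plan is to reduce the equation to a clean relation for $\Delta S$ alone and then apply the componentwise triangle inequality together with the monotonicity of nonnegative matrix--vector products.

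First, I would subtract the assumption $AS = I$ from $I + \Delta I = A(S + \Delta S)$ to obtain $A\,\Delta S = \Delta I$. Because $N \ge m$ and $A$ has full column rank, the paper has already noted that $A^+ = (A^T A)^{-1} A^T$, so in particular $A^+ A$ equals the $m \times m$ identity. Multiplying $A \Delta S = \Delta I$ on the left by $A^+$ therefore yields the key identity $\Delta S = A^+ \Delta I$. This step is essentially all of the linear-algebraic content of the theorem.

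Next, I would pass to absolute values componentwise. Writing $A^+ = (c_{ij})$ and $\Delta I = (\Delta I_j)$, the $i$-th component of $\Delta S$ is $\sum_j c_{ij} \Delta I_j$, and the ordinary triangle inequality gives $|\sum_j c_{ij} \Delta I_j| \le \sum_j |c_{ij}|\,|\Delta I_j|$. Stacking these inequalities into vector form is exactly the standard bound $|A^+ \Delta I| \le |A^+|\,|\Delta I|$ in the notation introduced before the theorem statement.

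Finally, I would use the hypothesis $|\Delta I| \le \Delta J$. Since $|A^+|$ has nonnegative entries, multiplication by $|A^+|$ preserves the componentwise order, so $|A^+|\,|\Delta I| \le |A^+|\,\Delta J = \Delta T$. Chaining these inequalities gives $|\Delta S| = |A^+ \Delta I| \le |A^+|\,|\Delta I| \le \Delta T$, which is \eqref{eq:AE}. I do not anticipate any serious obstacle: the only nontrivial ingredient is the explicit formula $A^+ = (A^T A)^{-1} A^T$ under the full column rank assumption, which the paper has already supplied; everything else is the componentwise triangle inequality and monotonicity under multiplication by a nonnegative matrix.
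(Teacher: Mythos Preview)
Your proposal is correct and matches the paper's own proof essentially step for step: subtract to get $A\,\Delta S = \Delta I$, invert via $A^+$ using full column rank, then chain $|\Delta S| = |A^+\Delta I| \le |A^+|\,|\Delta I| \le |A^+|\,\Delta J = \Delta T$. The only difference is that you spell out the justification for $|A^+\Delta I| \le |A^+|\,|\Delta I|$ and the monotonicity step more explicitly than the paper does.
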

\begin{remark}\label{rm:AE}
In practice, we regard the standard deviation of the experimentally obtained $I(Q)$ as $\Delta J$
\end{remark}

Define the condition number ${\rm cond}(A)$ by ${\rm cond}(A) := \norm{A}\norm{A^+}$. 
Let $\lambda_{\max}$ and $\lambda_{\min}$ be the largest and smallest singular values of $A$, respectively. 
We then have $\norm{A} = \lambda_{\max}$, $\norm{A^+} = 1/\lambda_{\min}$, so that ${\rm cond}(A) = \lambda_{\max}/\lambda_{\min}$. 
It can be shown by using the singular value decomposition  (e.g., \cite{GV}) of $A^+$ that $\norm{A^+I} \ge \norm{I}/\lambda_{\max}$. 
These relations\red{,} $S = A^+I$ \red{and $\Delta S = A^+\Delta I$} give 

\begin{equation}\label{eq:error_condA}
\frac{\norm{\red{\Delta S}}}{\norm{S}} = \frac{\norm{\red{A^+\Delta I}}}{\norm{A^+I}} \le \frac{\norm{A^+}\norm{\red{\Delta I}}}{\norm{I}/\lambda_{\max}} =
\frac{\lambda_{\max}\norm{\red{\Delta I}}}{\lambda_{\min}\norm{I}} = \frac{{\rm cond}(A)\norm{\red{\Delta I}}}{\norm{I}} 
\end{equation}
if $S \ne 0$ and $I \ne 0$. 
This inequality implies that $\norm{\red{\Delta S}}/\norm{S}$ is enlarged by ${\rm cond}(A)$, suggesting that ${\rm cond}(A)$ is an important parameter related to the degree of the error propagation from $I$ to $S$. More detailed explanation of ${\rm cond}(A)$ is shown in Appendix \ref{sec:condnum}.

\subsection{Statistical error estimation}\label{sec:statistical}
To statistically quantify the estimation errors $\Delta S_{ij}$, we rewrite \eqref{eq:LSE} into the following statistical model:
\begin{equation}\label{eq:stat-model-1}
    I + \Delta I = A S,
\end{equation}
where
\begin{equation*}
\begin{aligned}
I &:=
\begin{bmatrix}
I_1 \\
\vdots \\
I_N 
\end{bmatrix}, \ 
\Delta I := 
\begin{bmatrix}
\Delta I_1 \\
\vdots \\
\Delta I_N
\end{bmatrix}, \  
A := 
\begin{bmatrix}
\Delta_1\rho_1^2 & \Delta_1\rho_2^2 & 2\Delta_1\rho_1\Delta_1\rho_2 \\
\vdots & \vdots & \vdots \\
\Delta_N\rho_1^2 & \Delta_N\rho_2^2 & 2\Delta_N\rho_1\Delta_N\rho_2
\end{bmatrix},  \ 
S := 
\begin{bmatrix}
S_{11} \\
S_{22} \\
S_{12} 
\end{bmatrix}. 
\end{aligned}
\end{equation*}
In this formulation, $S_{ij}$ includes $\Delta S_{ij}$ since the values we want to estimate are considered random variables with a prior distribution in the Bayesian framework.
The following assumptions are made to build the statistical model.
\begin{itemize}
    \item Each $\Delta I_i$ is a normal random variable with mean zero and standard deviation $\sigma_i$
    \item $\Delta I_1, \ldots, \Delta I_N$ are probabilistically independent
    \item The prior distribution of $S$ is a multivariate normal distribution $\mathcal{N}(0, \alpha^2 E)$, where $\alpha > 0$ is a parameter and $E$ is an $N\times N$ identity matrix
\end{itemize}
From the first two assumptions, $\Delta I$ is a multivariate normal random variable with mean zero and covariance matrix $\Sigma$, where
$$
\Sigma = \begin{bmatrix}
    \sigma_1^2 &  & \\
    & \ddots & \\
    & & \sigma_N^2
\end{bmatrix}.
$$
Then the posterior distribution of $S$ is $\mathcal{N}(\bar{S}_{\alpha}, \bar{\Sigma}_\alpha)$ from Bayes' formula for multivariate normal distributions [Section 6.1 in \cite{uqtextbook}], where $\bar{\Sigma}_\alpha = (\alpha^{-2} E + A^T \Sigma^{-1} A)^{-1}$ and $\bar{S}_\alpha = \bar{\Sigma}_\alpha A^T \Sigma^{-1} I$.
The prior distribution represents the assumption on the scale of $S$, and we try to remove the effect of the assumption using noninformative prior by $\alpha \to +\infty$.
As a result, the posterior distribution of $S$ is $\mathcal{N}(\bar{S}, \bar{\Sigma})$, where 
\begin{equation}\label{eq:posterior-1}
\begin{aligned}
    \bar{\Sigma}_\alpha \to \bar{\Sigma} &= (A^T \Sigma^{-1} A)^{-1}, \\
    \bar{S}_\alpha \to \bar{S} &= \bar{\Sigma} A^T \Sigma^{-1}I. 
\end{aligned}  
\end{equation}
By setting 
\begin{equation}
    \bar{S}  = \begin{bmatrix}
        \bar{S}_{11} \\ \bar{S}_{22}\\ \bar{S}_{12}
    \end{bmatrix}, \text{ and }
    \bar{\Sigma} = \begin{bmatrix}
        \bar{\sigma}_{11, 11} & \bar{\sigma}_{11, 22} & \bar{\sigma}_{11, 12} \\
        \bar{\sigma}_{22, 11} & \bar{\sigma}_{22, 22} & \bar{\sigma}_{22, 12} \\
        \bar{\sigma}_{12, 11} & \bar{\sigma}_{12, 22} & \bar{\sigma}_{12, 12}
    \end{bmatrix},
\end{equation}
the result can be interpreted as follows:
\begin{itemize}
    \item The posterior distribution of $S_{11}$ is a normal distribution with mean $\bar{S}_{11}$ and variance $\bar{\sigma}_{11, 11}$. 
    This means that $\bar{S}_{11}$ is the most likely value of $S_{11}$, but the uncertainty of the estimation is described by the normal distribution whose variance is $\bar{\sigma}_{11, 11}$
    \item $S_{22}$ and $S_{12}$ can be also evaluated in the same way 
    \item A non-diagonal element of $\bar{\Sigma}$ is the covariance of estimated values; that is, the estimated values are correlated
\end{itemize}

We remark that $\bar{S}$ differs from the solution of the standard least square problem, $A^+I$. 
In fact, $ \bar{S}$ is the solution of the weighted least square problem: $\mathrm{argmin}_S\left\|\Sigma^{-1}(I - AS)\right\|$. 
This formula means that $\sigma_j$ quantifies the reliability of measurements, and squared errors are weighted by the reliability factors.

The following theorem is useful for estimating the error before an experiment since singular values can be calculated only from $A$. See Appendix~\ref{sec:proofs} for its proof.
\begin{theorem}\label{th:Vij}
    $\sqrt{|\text{an element of } \bar{\Sigma}|} \leq (1/\lambda_{\min})\max_j \Delta I_j$, where $\lambda_{\min}$ is the smallest singular value of the matrix $A$.
\end{theorem}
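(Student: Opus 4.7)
The plan is to work directly with the closed-form expression $\bar{\Sigma} = (A^T \Sigma^{-1} A)^{-1}$ derived in \eqref{eq:posterior-1}. The strategy is first to bound the 2-norm $\|\bar{\Sigma}\|$ in terms of $\lambda_{\min}$ and the largest of the $\sigma_j$ (which is how I read $\max_j \Delta I_j$ in the statement, since $\Delta I_j$ is a random variable and only its scale $\sigma_j$ makes dimensional sense on the right-hand side), and then to convert this operator bound into an entry-wise bound using the positive semidefiniteness of $\bar{\Sigma}$.

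For the 2-norm bound, set $\sigma_{\max} := \max_j \sigma_j$. Since $\Sigma^{-1} = \mathrm{diag}(\sigma_1^{-2}, \ldots, \sigma_N^{-2})$, the Loewner inequality $\Sigma^{-1} \succeq \sigma_{\max}^{-2} E$ holds. Conjugating by $A$ preserves this order, and since the full column rank of $A$ makes both $A^T \Sigma^{-1} A$ and $A^T A$ positive definite, inversion reverses it:
\[
\bar{\Sigma} = (A^T \Sigma^{-1} A)^{-1} \preceq \sigma_{\max}^{2}\,\inv{(A^T A)}.
\]
The eigenvalues of $\inv{(A^T A)}$ are the reciprocals $1/\lambda_i^2$ of the squared singular values of $A$, so its 2-norm equals $1/\lambda_{\min}^2$, and the Loewner bound passes to the 2-norm to give $\|\bar{\Sigma}\| \le \sigma_{\max}^2/\lambda_{\min}^2$.

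To turn this into an entry-wise bound, I would use that $\bar{\Sigma}$ is symmetric positive definite. Applying Cauchy--Schwarz to the inner product $(u,v) \mapsto u^T \bar{\Sigma} v$ with $u = e_i$, $v = e_j$ (equivalently, nonnegativity of the $2 \times 2$ principal minor indexed by $\{i,j\}$) gives $|\bar{\Sigma}_{ij}| \le \sqrt{\bar{\Sigma}_{ii}\,\bar{\Sigma}_{jj}}$, and each diagonal entry is bounded by the 2-norm since $\bar{\Sigma}_{ii} = e_i^T \bar{\Sigma} e_i \le \|\bar{\Sigma}\|$. Chaining the bounds yields $|\bar{\Sigma}_{ij}| \le \|\bar{\Sigma}\| \le \sigma_{\max}^2/\lambda_{\min}^2$, and taking a square root produces the claim.

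The main obstacle is notational rather than mathematical: the symbol $\max_j \Delta I_j$ is ambiguous because $\Delta I_j$ is a random variable, so the reading $\max_j \sigma_j$ must be fixed at the outset and the statement interpreted accordingly. Once that convention is set, the argument reduces to two Loewner comparisons plus the standard positive-semidefinite entry bound, with the only hypothesis needed beyond the model itself being the full column rank of $A$, which is required for $A^T \Sigma^{-1} A$ to be invertible in the first place.
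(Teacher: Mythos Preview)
Your proof is correct and follows essentially the same route as the paper: both bound $\|\bar{\Sigma}\|$ by comparing $A^T\Sigma^{-1}A$ with $\sigma_{\max}^{-2}A^TA$ (you phrase this via the Loewner order, the paper via the Rayleigh quotient $\min_{\|u\|=1}\|(\sqrt{\Sigma})^{-1}Au\|^2$), and both then pass to an entry-wise bound. Your entry-wise step through the PSD Cauchy--Schwarz inequality is slightly more elaborate than needed, since $|\bar{\Sigma}_{ij}| = |e_i^T\bar{\Sigma}e_j| \le \|\bar{\Sigma}\|$ holds for any matrix directly from the operator-norm definition, which is what the paper uses.
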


Because the diagonal elements of $\bar{\Sigma}$ are the standard deviations of $S_{11}, S_{22}$, and $S_{12}$, we can say that the absolute error is roughly scaled by $1/\lambda_{\min}$. 
Because the observation $I$ is roughly equal to $A \bar{S}$, we can estimate the relation between $\|I\|$ and $\|\bar{S}\|$ as follows: 
\begin{equation}
    \|I\| \approx \|A\bar{S}\| \leq \|A\| \|\bar{S}\| = \lambda_{\max} \|\bar{S}\|,
\end{equation}
where $\lambda_{\max}$ is the largest singular value of the matrix $A$.
This means that $\|\bar{S}\|$ is approximately bounded by $ \|I\|/\lambda_{\max} $ from below, and we can say that the relative error is roughly scaled by $\lambda_{\max}/\lambda_{\min} = \mathrm{cond}(A)$. 
This suggests that condition numbers are useful for estimating the relative errors from the viewpoint of Bayesian statistics.

Using the formulation, we can examine the robustness of the estimation.
The assumptions on model~\eqref{eq:stat-model-1} are not perfect, and real measurement has unknown error factors, such as the uncertainty of the scattering length density, deviation of the noise distribution from the normal distribution, and unknown bias of the measurement device.
Of course, such errors are expected to be very small, but if these small errors significantly disturb the result, the estimated result will not be reliable.

One of the simplest ways to check the robustness of the result is to extend the error bars of the measurement virtually.
Here, we consider what happened when $\sigma_1, \ldots, \sigma_N$ are multiplied by $\mu > 1$.
In this case, $\Sigma$ is multiplied by $\mu^2$ in~\eqref{eq:posterior-1}, and as a result $\bar{\Sigma}$ is multiplied by $\mu^2$ but $\bar{S}$ is not changed.
Therefore, the standard deviations of the posterior distributions are multiplied by $\mu$, which means that the estimation's uncertainty is enlarged by $\mu$.

\subsection{Computational data of core-shell sphere}\label{subsec:method-cs}

To verify the validity of the deterministic and statistical error estimations, we first applied the two methods to computational data of a core-shell sphere (Fig.~\ref{fig:Fig1_cs} (a)). We computed scattering intensities, $I(Q)$, of core-shell spheres dispersed in D\(_{2}\)O/H\(_{2}\)O mixtures with different D\(_{2}\)O fractions by using the "Core shell sphere" model of the SAS view software, version 5.0.6 (http://www.sasview.org/). The core radius and shell thickness were 50 {\AA} and 10 {\AA}, respectively. While the scattering length densities of the core and shell were fixed at 4.0 and 1.0 \(\times 10^{-6}\) {\AA}\(^{-2}\), the scattering length density of the solvent was changed with D\(_{2}\)O fraction, \(\phi\rm{_{D}}\)b, as follows ~\cite{NCgel}:
\begin{equation}\label{eq:sld_solvent}
    \rho\rm{_{water}} = 6.95 \times \phi\rm{_{D}} - 0.56\   [\times 10^{-6} {\AA}^{-2}]
\end{equation}
The core-shell samples with \(\phi\rm{_{D}}\) = 1.0, 0.90, 0.80, 0.66, 0.50, 0.40, 0.22, 0.10, and 0.0 are named as CS100, CS090, CS080, CS066, CS040, CS022, CS010, and CS000, respectively (Fig.~\ref{fig:Fig1_cs} (b)).

\begin{figure}
    \centering
    \includegraphics[width=1.0\linewidth]{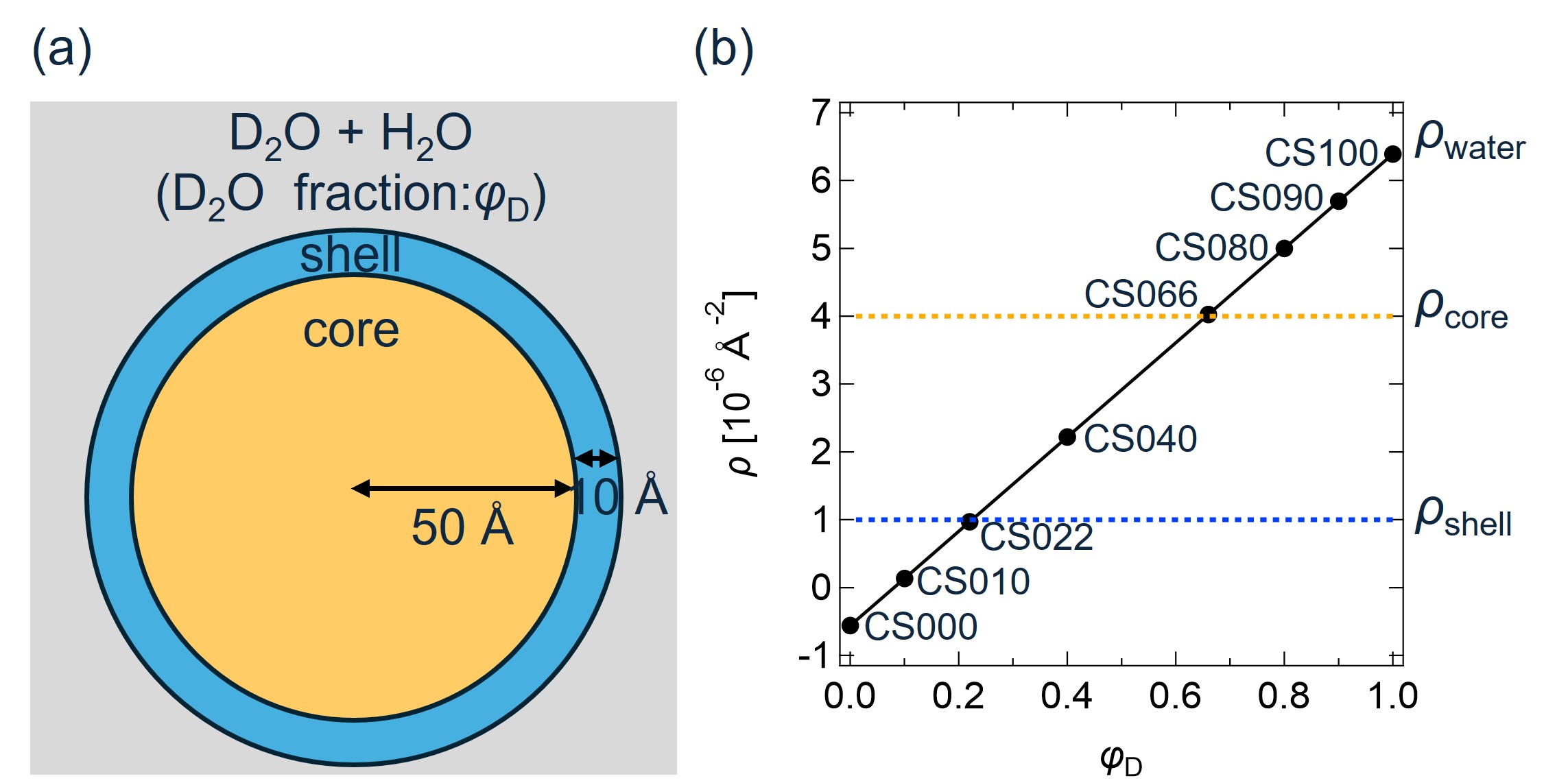}
    \caption{(a) Schematic illustration of a core-shell sphere dispersed in a D\(_{2}\)O/H\(_{2}\)O mixture; (b) scattering length densities of the core (\(\rho \rm{_{core}}\)), shell(\(\rho\rm{_{shell}}\)), and solvent(\(\rho\rm{_{water}}\)) plotted against D\(_{2}\)O fraction of the solvent, \(\phi\rm{_{D}}\).}.
    \label{fig:Fig1_cs}
\end{figure}

\subsection{CV-SANS data of clay/PEG aqueous solutions and polyrotaxane solutions}\label{subsec:method-cp-PR}
For the next step, the error estimation methods were applied to two sets of experimental CV-SANS data: (i) clay/polyethylenglycol (PEG) solutions (Fig.~\ref{fig:Fig2_cp} (a)) and (ii) polyrotaxane(PR) solutions (Fig.~\ref{fig:Fig3_PR} (a)). The clay/PEG solutions were prepared by dissolving Laponite XLG nanoclay (BYK) and PEG (\(M_{w}\) = 35,000, Fluka) in D\(_{2}\)O/H\(_{2}\)O mixtures with different D\(_{2}\)O fractions. 
According to a previous CV-SANS study on clay/PEG aqueous solutions~\cite{clayPEG}, PEG is adsorbed onto the surface of the clay particles, and a core-shell structure is formed as shown in Fig.~\ref{fig:Fig2_cp} (a). 
The volume fractions of clay and PEG were 2\(~\%\)  and 2.5\(~\%\), respectively. 
 The volume fraction of D\(_{2}\)O in the solvent, \(\phi_{D}\), was changed to vary the scattering contrasts of the clay and PEG.  Corresponding to \(\phi_{D}\) = 1.0, 0.80, 0.62, 0.40, 0.16, and 0.00, the clay/PEG solutions are named as CP100, CP080, CP062, CP040, Cp016, and CP000, respectively (Fig.~\ref{fig:Fig2_cp} (b)). The scattering length density of PEG, \(\rho\rm{_{PEG}}\), is \(0.65 \times 10^{-6} \rm{\AA}^{-2}\), while the scattering length densities of the D\(_{2}\)O/H\(_{2}\)O mixtures vary with  \(\phi_{D}\) as described in Eq.\eqref{eq:sld_solvent}. The scattering length density of the clay particle, \(\rho\rm{_{clay}}\), is given as follows: ~\cite{NCgel}: 
\begin{equation}\label{eq:sld_clay}
    \rho\rm{_{clay}} = 0.85 \times \phi\rm{_{D}} + 3.92\   [\times 10^{-6} {\AA}^{-2}]
\end{equation}

\begin{figure}
    \centering
    \includegraphics[width=1.0\linewidth]{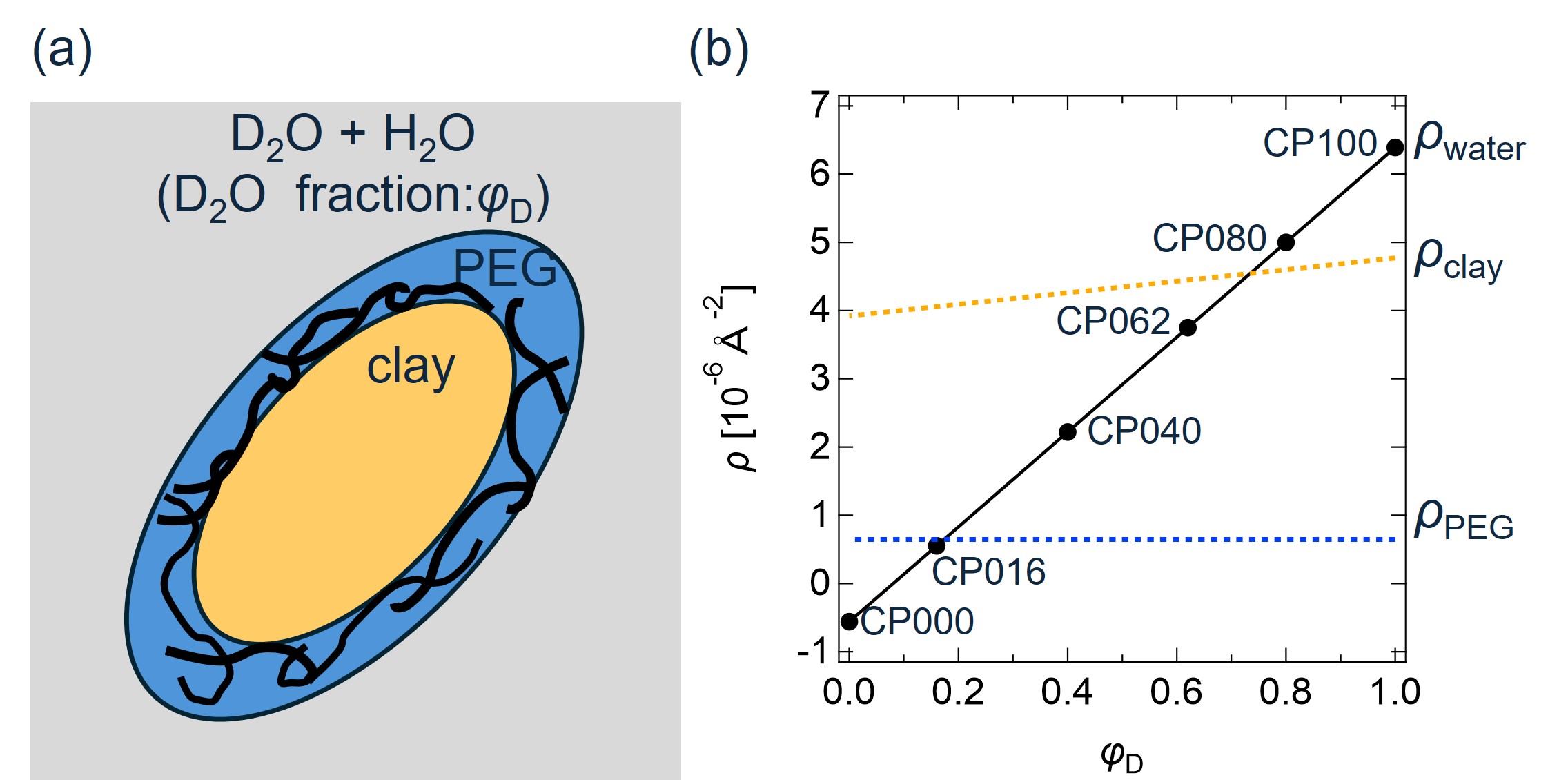}
    \caption{(a) Schematic illustration of a clay/PEG solution dissolved in a D\(_{2}\)O/H\(_{2}\)O mixture; (b) scattering length densities of the clay (\(\rho \rm{_{clay}}\)), PEG(\(\rho \rm{_{PEG}}\)), and solvent(\(\rho \rm{_{water}}\)) plotted against D\(_{2}\)O fraction of the solvent, \(\phi\rm{_{D}}\). }
    \label{fig:Fig2_cp}
\end{figure}

The CV-SANS data of PR solutions are reported in our previous paper~\cite{CVSANS_PR}. For the CV-SANS measurements of PR solutions, we used PR consisting of hydrogenated (h-) PEG or deuterated (d-) PEG as a liner polymer chain and \(\alpha\)-cyclodextrins (CDs) as rings (Fig.~\ref{fig:Fig3_PR} (a)). The scattering length densities \(\rho\) of h-PEG, d-PEG, and CD were \(0.65 \times 10^{6}\), \(7.1 \times 10^{6}\), and \(2.0 \times 10^{6}\) $\rm{\AA}^{-2}$, respectively. 
h-PR and d-PR were dissolved in mixtures of hydrogenated dimethyl sulfoxide (h-DMSO) and deuterated DMSO (d-DMSO). The volume fraction of PR in the solutions was 8\%. The volume fractions of d-DMSO in the solvent, \(\phi_{d}\) were 1.0, 0.95, 0.90, and 0.85, and the corresponding scattering length densities of the solvents were \(5.3 \times 10^{6}\), \(5.0 \times 10^{6}\), \(4.7 \times 10^{6}\), and \(4.5 \times 10^{6}\) $\rm{\AA}^{-2}$, respectively.
Based on the d-DMSO fraction and type of PR, the PR solutions are named as hPR100, hPR095, hPR090, hPR085, dPR100, dPR095, dPR090, dPR085, as shown in Fig.~\ref{fig:Fig3_PR} (b).

\begin{figure}
    \centering
    \includegraphics[width=1.0\linewidth]{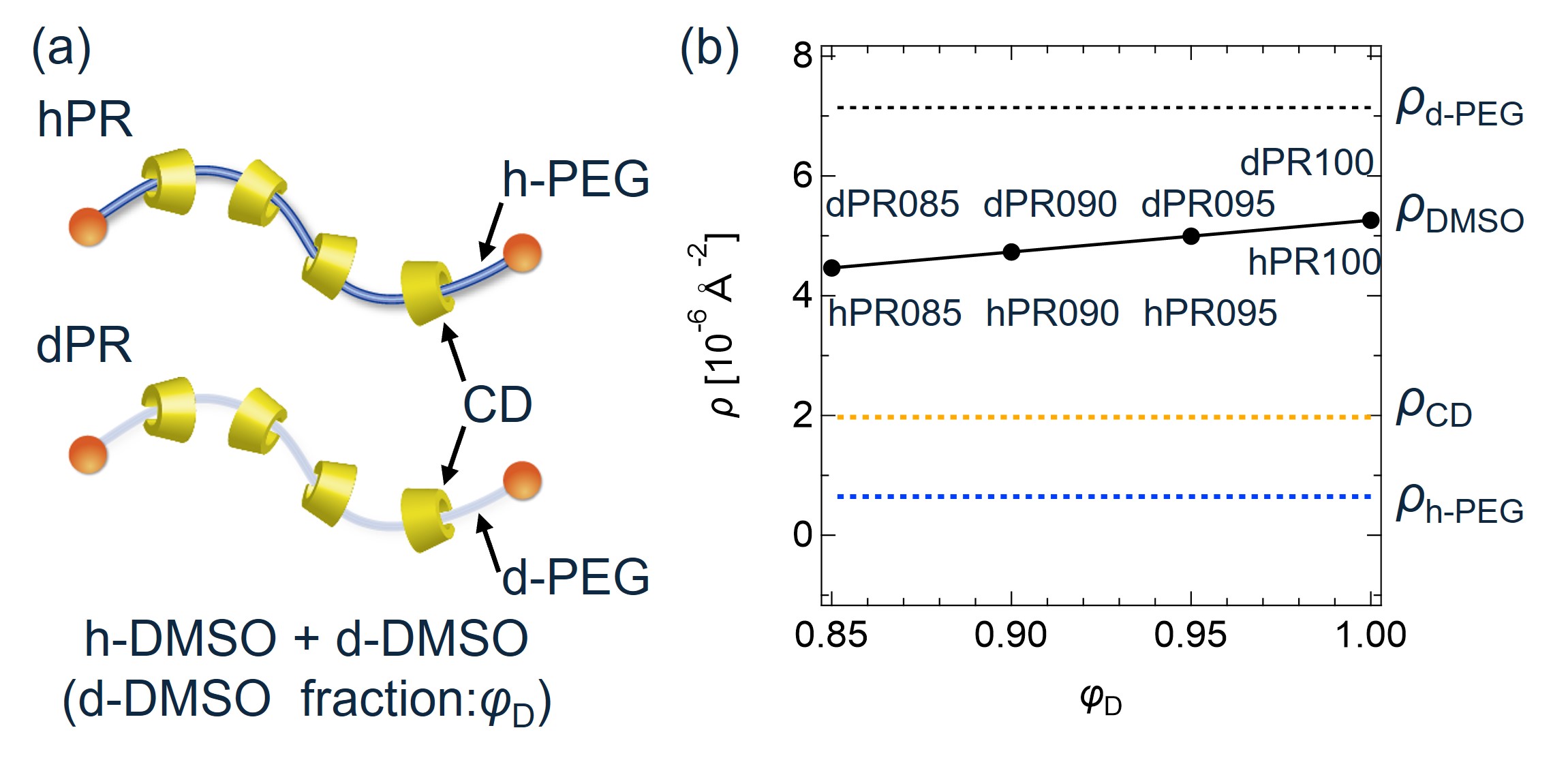}
    \caption{(a) Schematic illustration of a polyrotaxane (PR) solution dissolved in a d-DMSO/h-DMSO mixture; (b) scattering length densities of h-PEG (\(\rho \rm{_{h-PEG}}\)), d-PEG(\(\rho\rm{_{d-PEG}}\)), CD(\(\rho\rm{_{CD}}\)), and solvent(\(\rho \rm{_{DMSO}}\)) plotted against d-DMSO fraction of the solvent, \(\phi_{\rm{D}}\). }
    \label{fig:Fig3_PR}
\end{figure}

The SANS measurements of the clay/PEG and PR solutions were performed at 298 K by using the SANS-U diffractometer of the Institute for Solid State Physics, The University of Tokyo, located at the JRR-3 research reactor of the Japan Atomic Energy Agency in Tokai, Japan. The incident beam wavelength was 7.0 \AA, and the wavelength distribution was 10\%. Sample-to-detector distances were 1 and 8 m for the clay/PEG solutions, and 1 and 4 m for the PR solutions, respectively. The scattered neutrons were collected with a two-dimensional detector and then the necessary corrections were made, such as air and cell scattering subtractions. After these corrections, the scattered intensity was normalized to the absolute intensity using a standard polyethylene film with known absolute scattering intensity. The two-dimensional intensity data were circularly averaged and the incoherent scattering was subtracted.
The averaged scattering intensities, \(I\), were plotted against the magnitude of the scattering vector, \( Q \). The error bars of \(I(Q)\) were given by \(\Delta I\) = \(\pm \sigma\), where \(\sigma\) represents the standard deviation of the circular averaging.

\section{Results}

\subsection{Error estimation for computational data of core-shell sphere}\label{sec:results_cs}

The computed scattering intensities, \(I(Q)\), of the core-shell sphere are shown in Fig.~\ref{fig:Fig4_I_cs}. 
The relative error of \(I(Q)\), \(\Delta I(Q) / I(Q)\), is set at \(\pm\) 0.05, giving the error bars in Fig.~\ref{fig:Fig4_I_cs}.

The scattering intensity, $I(Q)$, of the core-shell sphere is represented as follows:
\begin{equation}\label{eq:I_S_cs}
    I(Q) = \Delta\rho_{\rm C}^2 S_{\rm CC}(Q) + \Delta\rho_{\rm S}^2 S_{\rm SS}(Q) + 2\Delta\rho_{\rm C}\Delta\rho_{\rm S} S_{\rm CS}(Q)
\end{equation}
Here, \( S_{\rm CC}(Q) \) is the self-term of the core, \( S_{\rm SS}(Q) \) is the self-term of the shell, and \( S_{\rm CS}(Q) \) is the cross-term between the core and shell.
This section considers the case when $S_{11}$, $S_{22}$, $S_{12}$, $\Delta_i\rho_1$, and $\Delta_i\rho_2$ for $i = 1,\dots,N$ in Eq. \eqref{eq:LSE} correspond to $S_{\rm CC}$, $S_{\rm SS}$, $S_{\rm CS}$, $\Delta\rho_{\rm C}$, and $\Delta\rho_{\rm S}$, respectively. 

\begin{figure}
    \centering
    \includegraphics[width=0.8\linewidth]{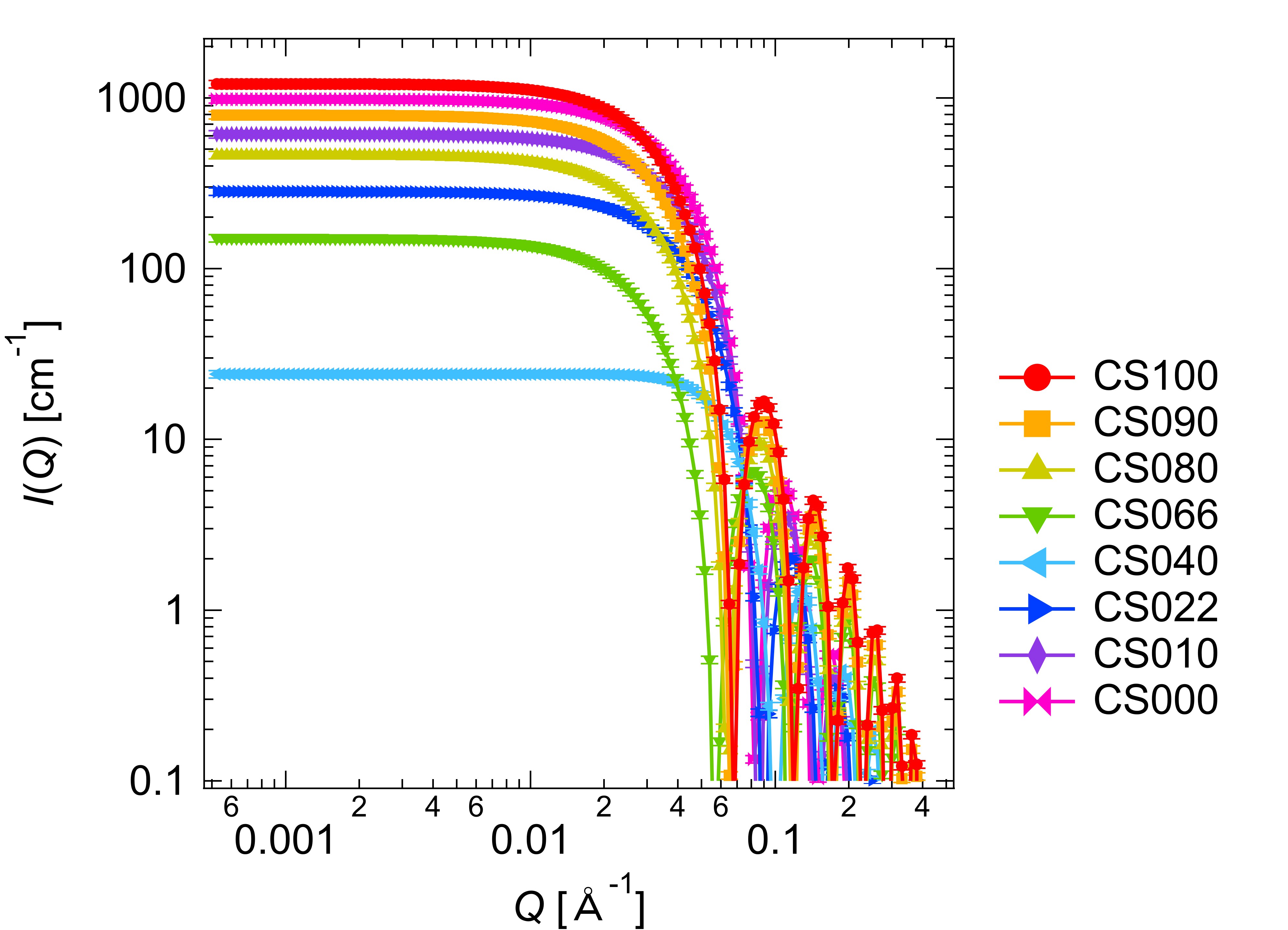}
    \caption{Computed scattering intensities, \(I(Q)\), with error bars of the core-shell sphere: CS100, CS090, CS080, CS066, CS040, CS022, CS010, and CS000. The error bars are given by \(\Delta I(Q) / I(Q) = \pm 0.05\)}
    \label{fig:Fig4_I_cs}
\end{figure}

\begin{enumerate}
  \item[\textit{3.1.1}] \textit{Deterministic error estimation of core-shell sphere}
\end{enumerate}

In this section, we present the results of the deterministic error estimation described in Section~\ref{sec:DEE}. 
Let $I$, $\Delta I$, $A$, $S$, and $\Delta S$ be as defined in Eq.(\ref{eq:notation-IAS}).  
Here, the vector $\Delta J$ in Theorem~\ref{th:AE} is 0.05 $\times I$. 

Using various combinations of three scattering intensities out of the eight data shown in Fig.~\ref{fig:Fig4_I_cs}, we calculated the partial scattering functions and their errors for the core-shell sphere. Denote the upper bounds on $\abs{\Delta S_{\rm CC}}$, $\abs{\Delta S_{\rm SS}}$, and $\abs{\Delta S_{\rm CS}}$ obtained based on Theorem~\ref{th:AE} by $\Delta T_{\rm CC}$, $\Delta T_{\rm SS}$, and $\Delta T_{\rm CS}$, respectively. 
Figs.~\ref{fig:Fig5_cs_miyajjima} (a)-(f) display the numerically computed partial scattering functions, $S_{\rm CC} + \Delta S_{\rm CC}$, $S_{\rm SS} + \Delta S_{\rm SS}$, and $S_{\rm CS} + \Delta S_{\rm CS}$. The error bars for the partial scattering functions are given by $\Delta T_{\rm CC}$, $\Delta T_{\rm SS}$, and $\Delta T_{\rm CS}$. 
Fig.~\ref{fig:Fig5_cs_miyajjima} also shows the relative errors of the partial scattering functions, defined as  $\Delta T_{\rm CC}/(S_{\rm CC} + \Delta S_{\rm CC})$, $\Delta T_{\rm SS}/(S_{\rm SS} + \Delta S_{\rm SS})$, and $\Delta T_{\rm CS}/(S_{\rm CS} + \Delta S_{\rm CS})$. 

The calculated partial scattering functions are identical regardless of contrast combinations. Additionally, note that \( S_{\rm CC}(Q) \) and \( S_{\rm SS}(Q) \) completely overlap with $I(Q)/\Delta\rho_{\rm C}^2$ for CS022 (shell matching, black solid lines in Fig.~\ref{fig:Fig5_cs_miyajjima}) and $I(Q)/\Delta\rho_{\rm S}^2$ for CS066 (core matching, black dotted lines in Fig.~\ref{fig:Fig5_cs_miyajjima}), respectively, indicating the validity of our calculation results. The relative errors of $S$ vary with different contrast combinations corresponding to various condition numbers of the matrix $A$, cond(A), as shown in Fig.~\ref{fig:Fig5_cs_miyajjima}.

\begin{figure}
    \centering
    \includegraphics[width=0.9\linewidth]{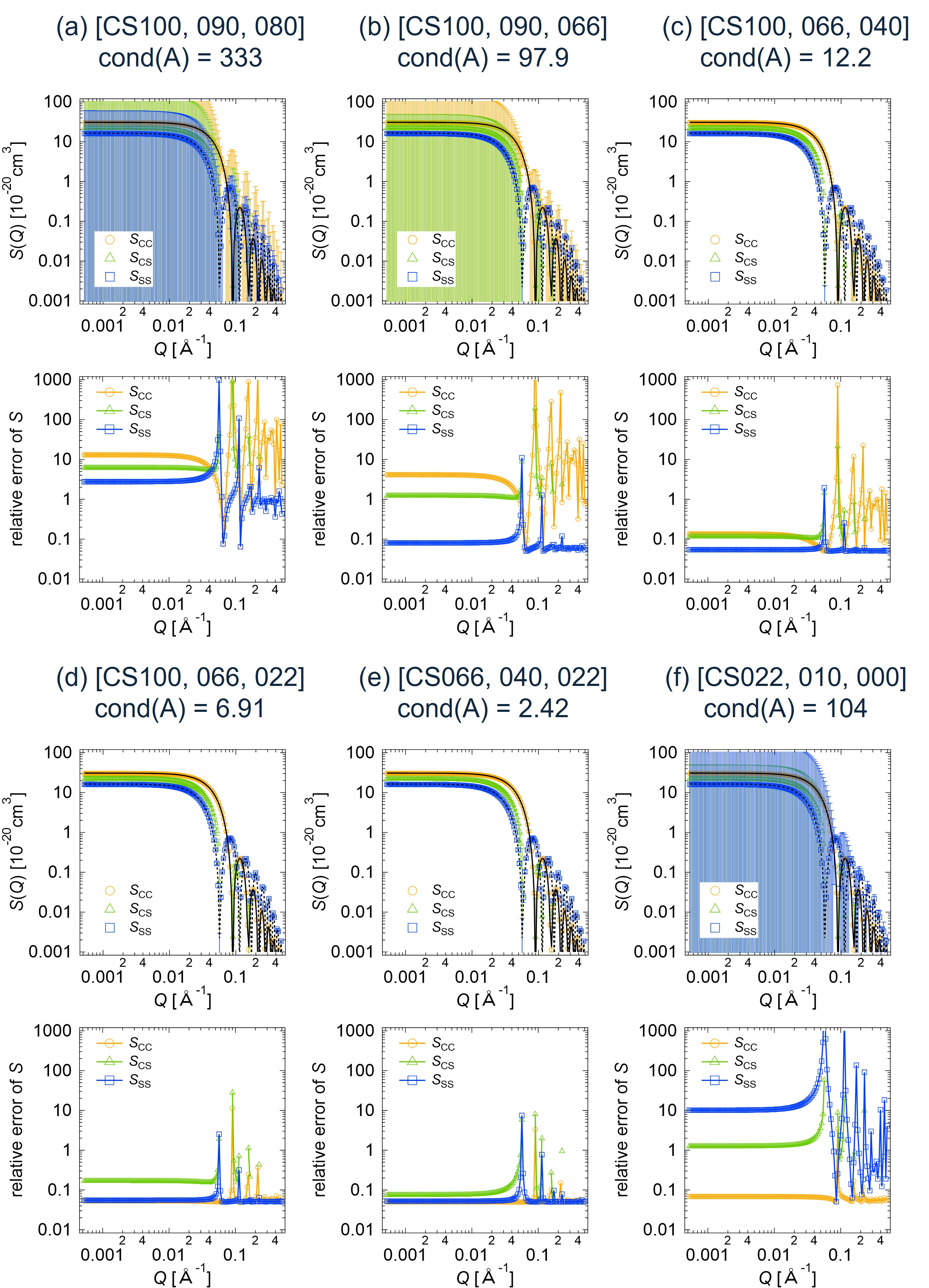}
    \caption{Partial scattering functions with error bars and their relative errors for the core-shell sphere, obtained by applying the deterministic error estimation method to the different data set of the computed scattering intensities: (a) CS100, CS090, CS080; (b) CS100, CS090, CS066; (c) CS100, CS066, CS040; (d) CS100, CS066, CS022; (e) CS066, CS040, CS022; (f) CS022, CS011, CS000. cond(A) is the condition number of the matrix A. The black solid and dotted lines correspond to $I(Q)/\Delta\rho_{\rm C}^2$ of CS022 (shell matching) and $I(Q)/\Delta\rho_{\rm S}^2$ of CS066 (core matching), respectively.}
    \label{fig:Fig5_cs_miyajjima}
\end{figure}


\begin{enumerate}
  \item[\textit{3.1.2}] \textit{Statistical error estimation of core-shell sphere}
\end{enumerate}

Next, we applied the statistical method described in Section~\ref{sec:statistical} to the computational core-shell sphere data by setting $I_1, \ldots, I_N$ to the computed scattering intensities and $\sigma_1, \ldots, \sigma_N$ to the artificial errors, 0.05$\times I$.
Fig.~\ref{fig:Fig6_cs_ohbayashi} shows the estimated partial scattering functions and their errors computed by Eq.~\eqref{eq:posterior-1} for the different contrast combinations (a) to (f). 
The error bars represent $\bar{S}_{\rm CC} \pm \bar{\sigma}_{\rm CC},\  \bar{S}_{\rm SS} \pm \bar{\sigma}_{\rm SS}$ and $\bar{S}_{\rm CS} \pm \bar{\sigma}_{\rm CS}$, where $\bar{\sigma}_{\rm CC} = \sqrt{\bar{\sigma}_{\rm CC,\rm CC}}$, $\bar{\sigma}_{\rm SS} = \sqrt{\bar{\sigma}_{\rm SS,\rm SS}}$, and $\bar{\sigma}_{\rm CS} = \sqrt{\bar{\sigma}_{\rm CS,\rm CS}}$. Similar to the results of the deterministic estimation, the calculated partial scattering functions are the same for all the contrast combinations, and the obtained \( S_{\rm CC}(Q) \) and \( S_{\rm SS}(Q) \) are identical with $I(Q)/\Delta\rho_{\rm C}^2$ of CS022 (shell matching, black solid lines in Fig. ~\ref{fig:Fig6_cs_ohbayashi}) and $I(Q)/\Delta\rho_{\rm S}^2$ of CS066 (core matching, black dotted lines in Fig. ~\ref{fig:Fig6_cs_ohbayashi}).
Fig.~\ref{fig:Fig6_cs_ohbayashi} also shows the relative estimated errors of the partial scattering functions, namely $\bar{\sigma}_{\rm CC} / \bar{S}_{\rm CC}, \bar{\sigma}_{\rm pp} / \bar{S}_{\rm SS}$, and $\bar{\sigma}_{\rm CS} / \bar{S}_{\rm CS}$. 

\begin{figure}
    \centering
    \includegraphics[width=0.9\linewidth]{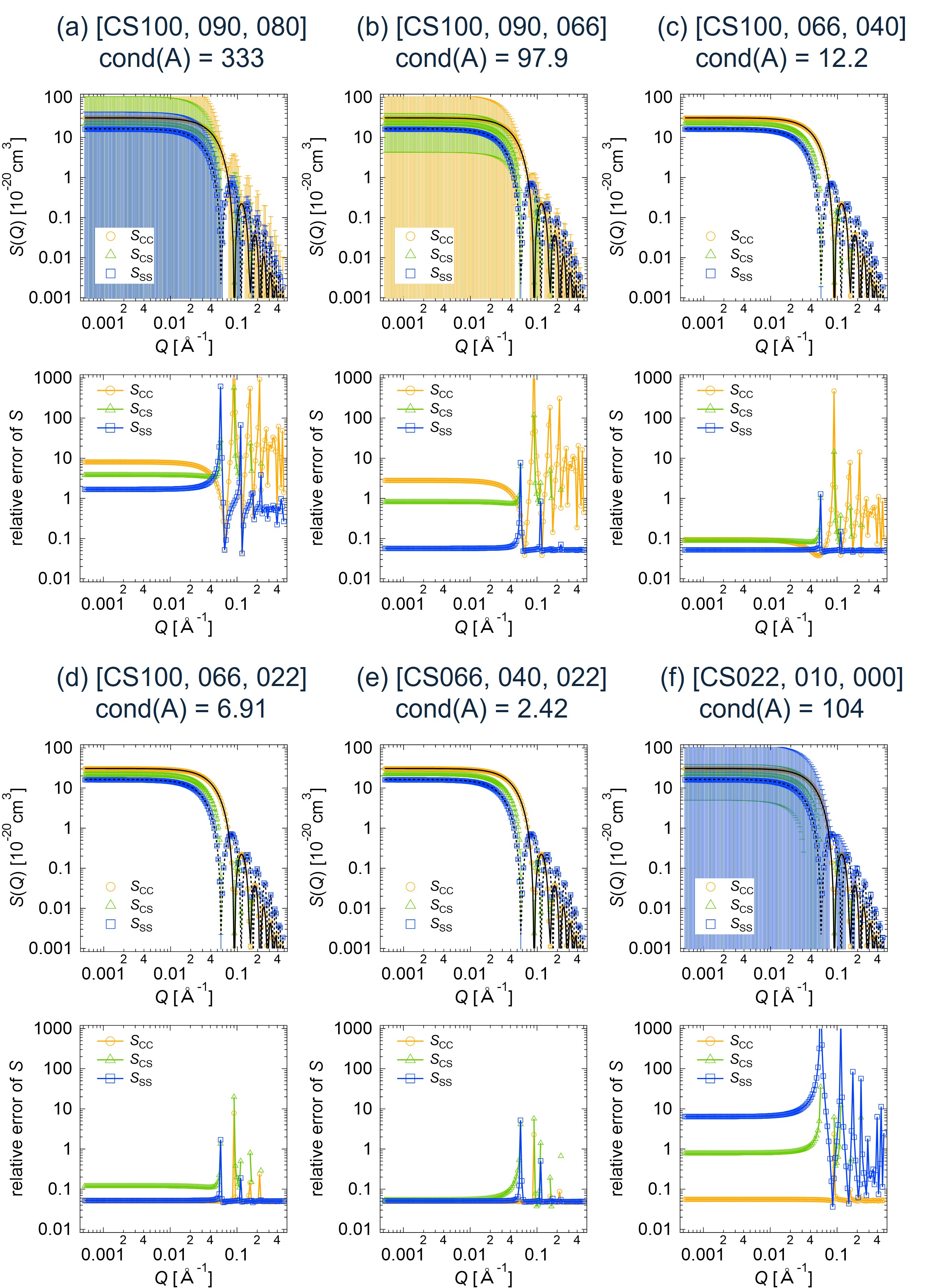}
    \caption{Partial scattering functions with error bars and their relative errors for the core-shell sphere, obtained by applying the statistical error estimation method to the different data sets of the computed scattering intensities: (a) CS100, CS090, CS080; (b) CS100, CS090, CS066; (c) CS100, CS066, CS040; (d) CS100, CS066, CS022; (e) CS066, CS040, CS022; (f) CS022, CS011, CS000.The black solid and dotted lines correspond to  $I(Q)/\Delta\rho_{\rm C}^2$ of CS022 (shell matching) and $I(Q)/\Delta\rho_{\rm S}^2$ of CS066 (core matching), respectively.}
    \label{fig:Fig6_cs_ohbayashi}
\end{figure}

\begin{enumerate}
  \item[\textit{3.1.3}] \textit{Comparison between deterministic and statistical error estimation results of core-shell sphere}
\end{enumerate}

We compare the results obtained from the determinisic error estimation (Fig.~\ref{fig:Fig5_cs_miyajjima}) and the statistical error estimation (Fig.~\ref{fig:Fig6_cs_ohbayashi}). 
In Fig.~\ref{fig:Fig7_relS_ef_cs}, the relative errors of the partial scattering functions $S$ at $Q = 0.01~\rm{\AA}^{-1}$ are plotted against the condition number of $A$. The statistical estimation yields smaller relative errors of $S$ compared to the deterministic estimation.
This difference in error propagation may result from the different assumptions underlying the two methods.
The deterministic error estimation requires weaker assumptions than the statistical method (see Sections~\ref{sec:DEE} and ~\ref{sec:statistical}), 
which results in the larger relative errors of $S$ obtained from the deterministic method.

Furthermore, Fig.~\ref{fig:Fig7_relS_ef_cs} shows a positive correlation between the relative error of $S$ and the condition number of $A$. This suggests that increasing the condition number of $A$ results in increasing the degree of error propagation from the scattering intensities to the partial scattering functions, which is consistent with the explanation of the condition number shown in Appendix \ref{sec:condnum}. We define an error propagation factor as the relative error of $S$ divided by the relative error of $I$, 0.05. The right axis of Fig.~\ref{fig:Fig7_relS_ef_cs} represents this error propagation factor. When the condition number of $A$ is at its minimum value, 2.42, the error propagation factors for all the partial scattering functions are close to 1, indicating that the error propagation is minimized and that the partial scattering functions are accurately determined. However, for the maximum condition number of $A$, 333, the error propagation factors range from 30 to 200, resulting in the large relative errors of $S$ exceeding 1 and the large error bars of $S$ shown in Figs.~\ref{fig:Fig5_cs_miyajjima} (a) and ~\ref{fig:Fig6_cs_ohbayashi} (a).

\begin{figure}
    \centering
    \includegraphics[width=0.8\linewidth]{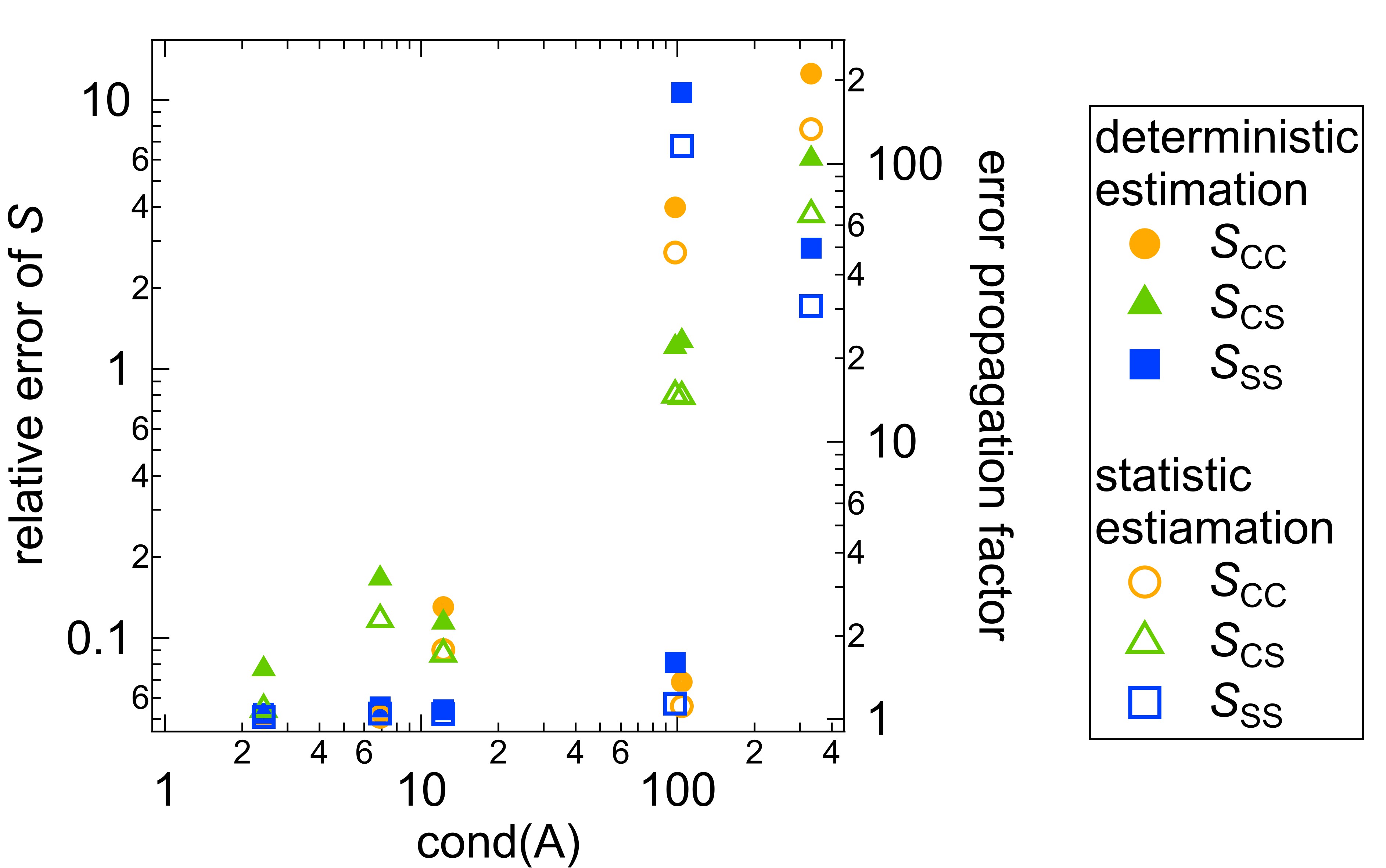}
    \caption{Plot of the relative errors of $S$ versus the condition number of $A$ for the core-shell sphere system. The right axis represents the error propagation factor.}
    \label{fig:Fig7_relS_ef_cs}
\end{figure}

\subsection{Error estimation for experimental data of clay/PEG solutions}\label{sec:results_cp}

The experimentally measured scattering intensities  \(I(Q)\) of the clay/PEG solutions with various D\(_{2}\)O fractions are shown in Fig.~\ref{fig:Fig8_I_cp} (a). As described in Section~\ref{subsec:method-cp-PR}, the error bars for \(I(Q)\) are given by \(\Delta I\) = \(\pm \sigma\), in which \(\sigma\) is the standard deviation of the circular averaging. The relative errors of \(I(Q)\), \(\sigma/I(Q)\), are shown in Fig.~\ref{fig:Fig8_I_cp} (b).

The scattering intensities $I(Q)$ of the clay/PEG solutions are given by the following equation:
\begin{equation}\label{eq:I_S_cp}
    I(Q) = \Delta\rho_{\rm C}^2 S_{\rm CC}(Q) + \Delta\rho_{\rm P}^2 S_{\rm PP}(Q) + 2\Delta\rho_{\rm C}\Delta\rho_{\rm P} S_{\rm CP}(Q)
\end{equation}
Here, \( S_{\rm CC}(Q) \) is the self-term of the clay particles, \( S_{\rm PP}(Q) \) is the self-term of PEG, and \( S_{\rm CP}(Q) \) is the cross-term between the clay and PEG.
In this section, $S_{11}$, $S_{22}$, $S_{12}$, $\Delta_i\rho_1$, and $\Delta_i\rho_2$ for $i = 1,\dots,N$ in Eq. \eqref{eq:LSE} are represented as $S_{\rm CC}$, $S_{\rm PP}$, $S_{\rm CP}$, $\Delta\rho_{\rm C}$, and $\Delta\rho_{\rm P}$, respectively. 

\begin{figure}
    \centering
    \includegraphics[width=0.7\linewidth]{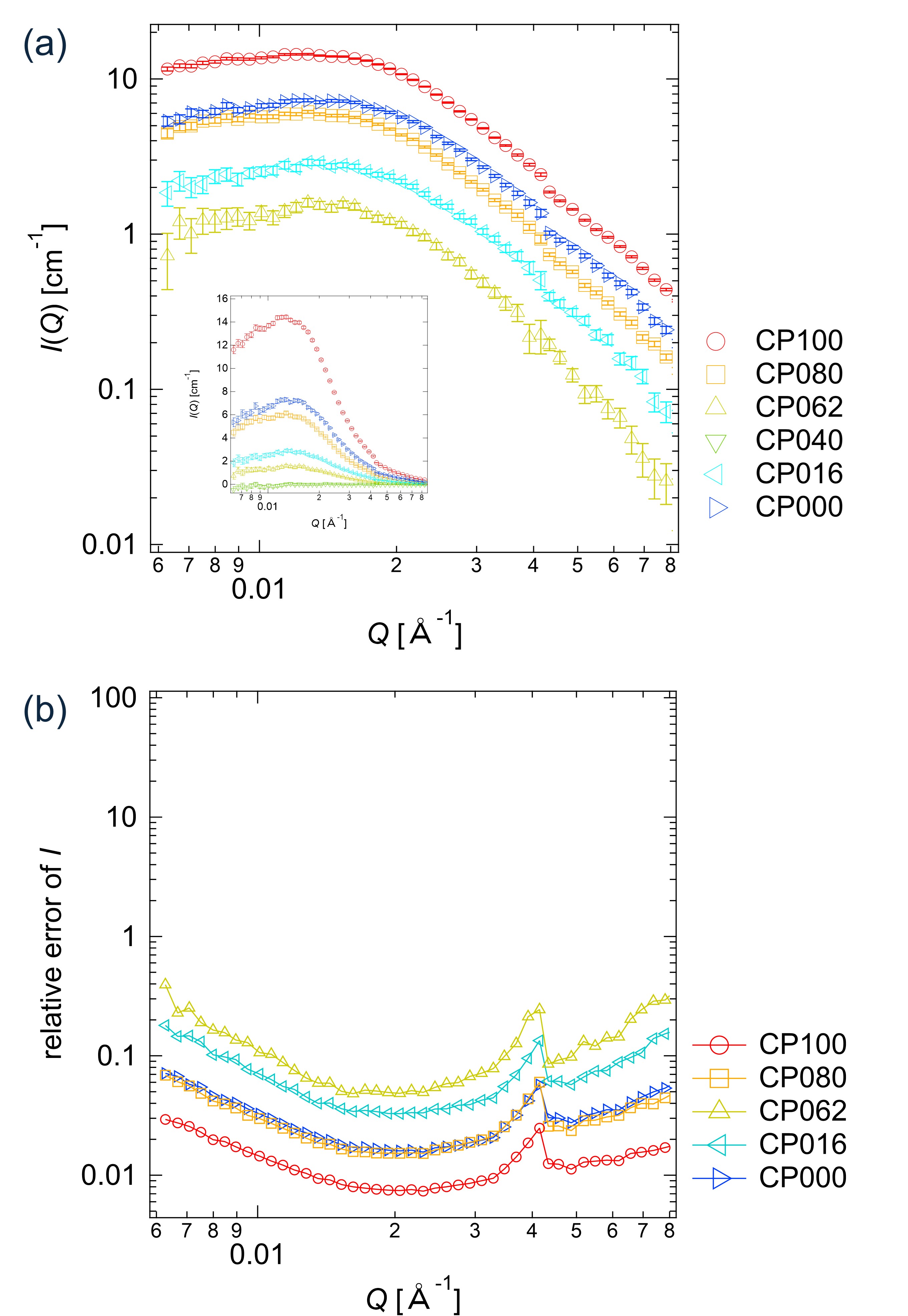}
    \caption{(a) Experimentally measured scattering intensities  \(I(Q)\)  with error bars and (b) relative errors of  \(I(Q)\)'s for the clay/PEG solutions: CP100, CP080, CP062, CP040, CP016 and CP000. In the insert of (a), linear  \(I(Q)\) is plotted against log \(Q\)
    }
    \label{fig:Fig8_I_cp}
\end{figure}

\begin{enumerate}
  \item[\textit{3.2.1}] \textit{Deterministic error estimation of clay/PEG solutions}
\end{enumerate}
 We applied the deterministic error estimation method to the CV-SANS experimental data of the clay/PEG solutions. 
For $c \in \V{n}$ and $r \in \Vp{n}$, define $\itv{c}{r} := [c-r,c+r]$. 
Let $I$, $\Delta I$, $A$, $S$, and $\Delta S$ be as defined in Eq.(\ref{eq:notation-IAS}).  
As mentioned in Remark~\ref{rm:AE}, we considered the standard deviation of the experimentally obtained $I(Q)$ as $\Delta J$ in Theorem~\ref{th:AE}. 
Consequently, the interval $\itv{I + \Delta I}{\Delta J}$ contains $I$ with a probability of approximately 68.3\% if $I$ follows a normal distribution. 
Because $\Delta T$ in Theorem~\ref{th:AE} represents the upper bound on $\abs{\Delta S}$, the interval $\itv{S + \Delta S}{\Delta T}$ contains $S$ with a probability equal to or greater than 68.3\% in this case.  
If $I \in \itv{I + \Delta I}{\Delta J}$ holds rigorously, then $S \in \itv{S + \Delta S}{\Delta T}$. 

In the same way as the deterministic estimation for the core-shell sphere (Section 3.1.1), we calculated the partial scattering functions and their errors for the clay/PEG solutions from various combinations of three data sets among the six with different scattering contrasts. 
Figs.~\ref{fig:Fig9_cs_miyajjima}(a)-(d) show the calculated partial scattering functions and their relative errors for different contrast combinations, corresponding to the condition numbers of $A$ from 2.96 to 48.6. The obtained partial scattering functions are almost the same for all the cases, while the relative errors of the partial scattering functions vary depending on the contrast combination. The cross-term \( S_{\rm CP}(Q) \) is positive, indicating the PEG chains are adsorbed onto the clay particles ~\cite{clayPEG}.

\begin{figure}
    \centering
    \includegraphics[width=1.0\linewidth]{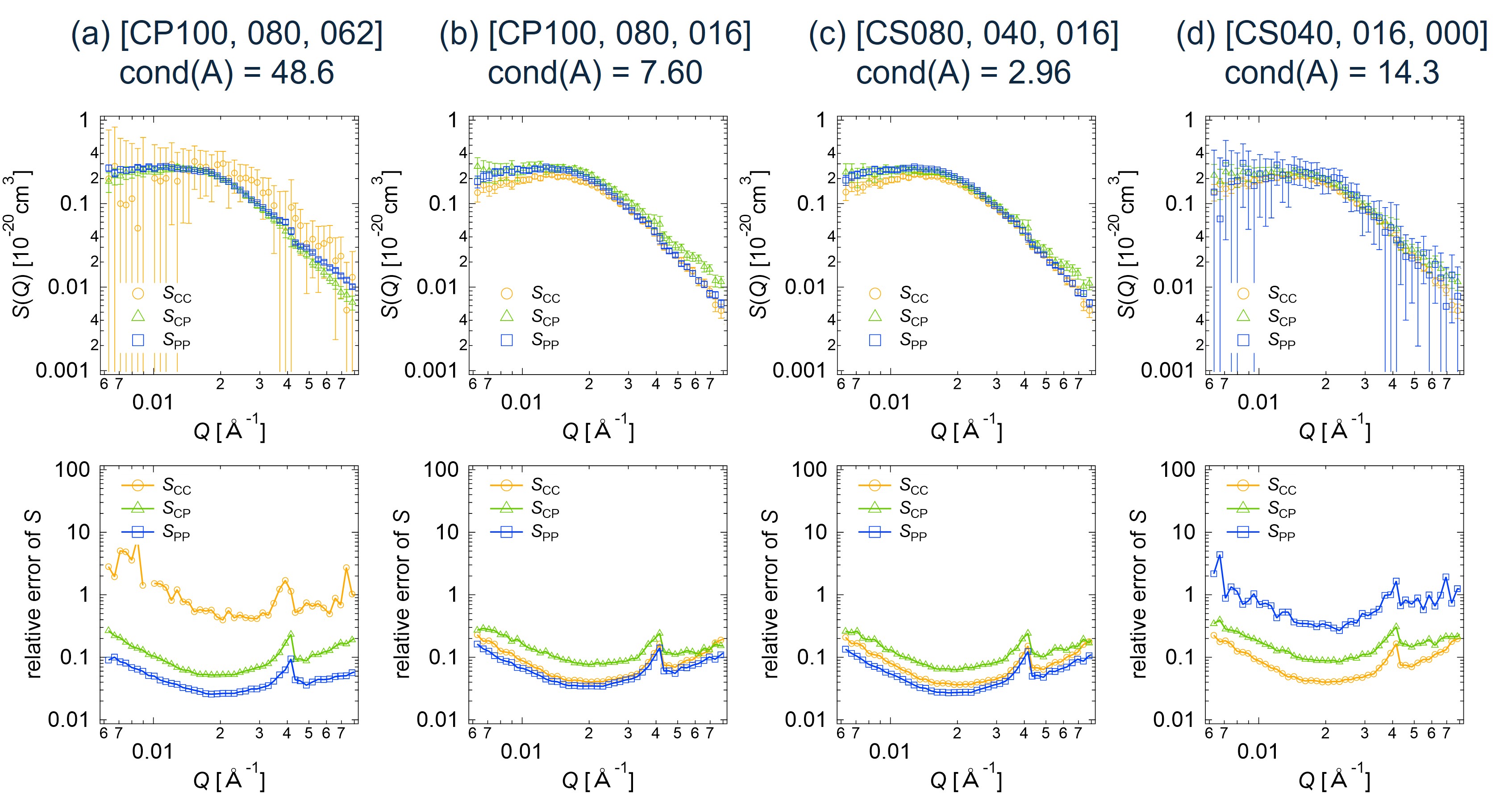}
    \caption{Partial scattering functions with error bars and their relative errors for the clay/PEG solutions, obtained by applying the deterministic error estimation method to the different data sets of the SANS scattering intensities: (a) CP100, CP080, CP062; (b) CP100, CP080, CP016; (c) CP080, CP040, CP016; (d) CP040, CP016, CP000.}
    \label{fig:Fig9_cs_miyajjima}
\end{figure}

\begin{enumerate}
  \item[\textit{3.2.2}] \textit{Statistical error estimation of clay/PEG solutions}
\end{enumerate}

Here, we present the statistical error estimation results for the clay/PEG solutions.
For this analysis, we set $I_1, \ldots, I_N$ to the circularly averaged scattering intensities and $\sigma_1, \ldots, \sigma_N$ to the standard deviations of these averages.
Fig.~\ref{fig:Fig10_cp_ohbayashi} shows the partial scattering functions and their relative errors, computed using Eq.~\eqref{eq:posterior-1} for the different contrast combinations (a) to (d). The partial scattering functions obtained through the statistical method are quite similar to those obtained using the deterministic method, which is shown in Fig.~\ref{fig:Fig9_cs_miyajjima}.

\begin{figure}
    \centering
    \includegraphics[width=1.0\linewidth]{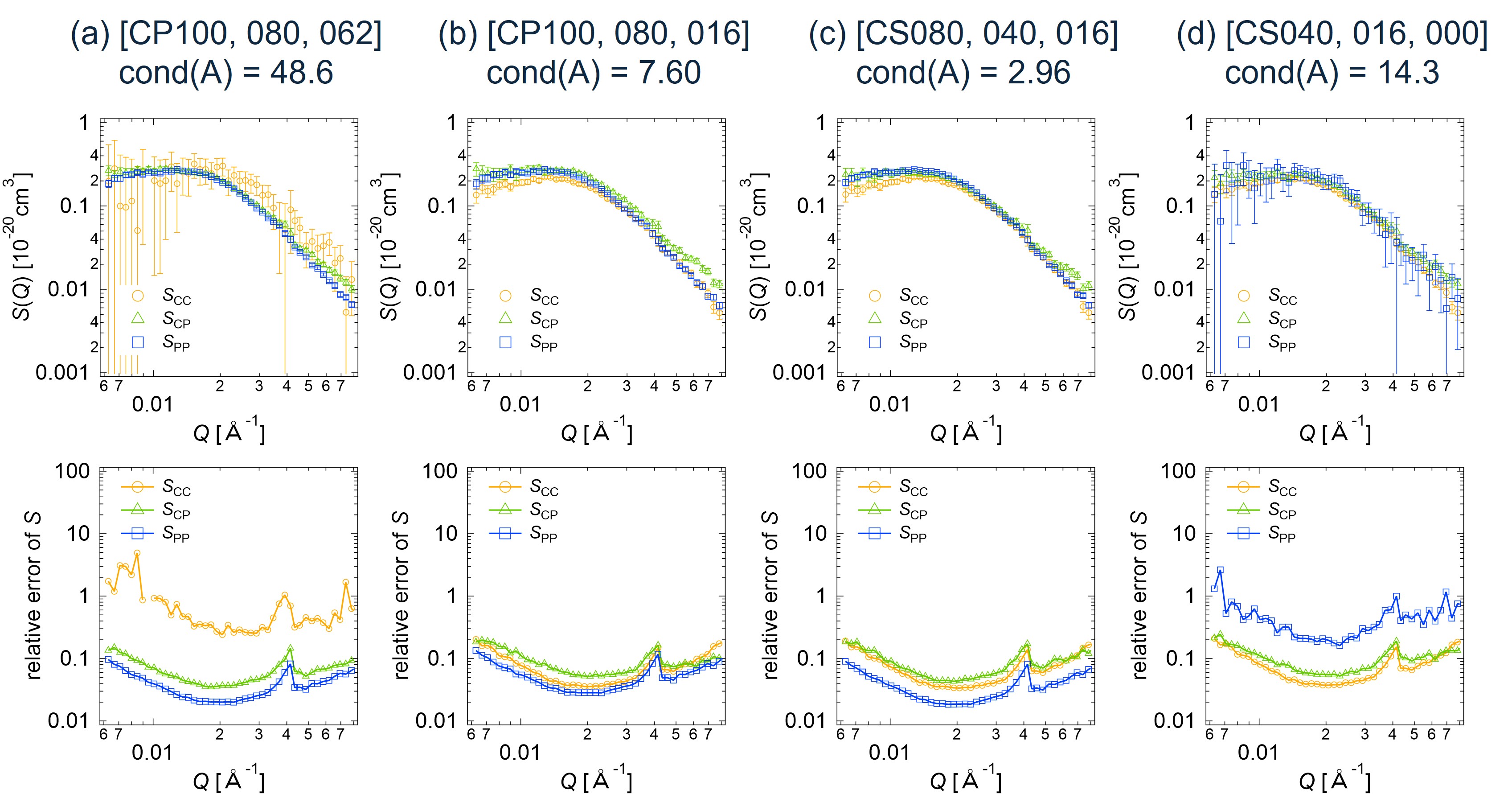}
    \caption{Partial scattering functions with error bars and their relative errors for the clay/PEG solutions, obtained by applying the statistical error estimation method to the different data sets of the SANS scattering intensities: (a) CP100, CP080, CP062; (b) CP100, CP080, CP016; (c) CP080, CP040, CP016; (d) CP040, CP016, CP000.}
    \label{fig:Fig10_cp_ohbayashi}
\end{figure}

\begin{enumerate}
  \item[\textit{3.2.3}] \textit{Comparison between deterministic and statistical error estimation results for clay/PEG solutions}
\end{enumerate}

Fig.~\ref{fig:Fig11_relS_cp} displays the relationship between the condition number of $A$ and the relative errors of the partial scattering functions $S$ for the clay/PEG solutions at $Q = 0.02~\rm{\AA}^{-1}$. This is similar to that observed for the core-shell sphere (Fig.~\ref{fig:Fig7_relS_ef_cs}). 
The relative errors calculated by the statistical method are smaller than those obtained with the deterministic estimation method. 
Additionally, reducing the condition number of $A$ decreases the relative errors of $S$. When the condition number of $A$ is 2.96 or 7.60, the relative errors of $S$ are less than 0.1, and all the three partial scattering functions are determined with high accuracy. In contrast, for cond($A$) = 14.3 or 48.6, the relative error of at least one partial scattering function exceeds 0.2, resulting in the large error bars in Figs.~\ref{fig:Fig9_cs_miyajjima} (a)(d) and Figs.~\ref{fig:Fig10_cp_ohbayashi} (a)(d).

\begin{figure}
    \centering
    \includegraphics[width=0.8\linewidth]{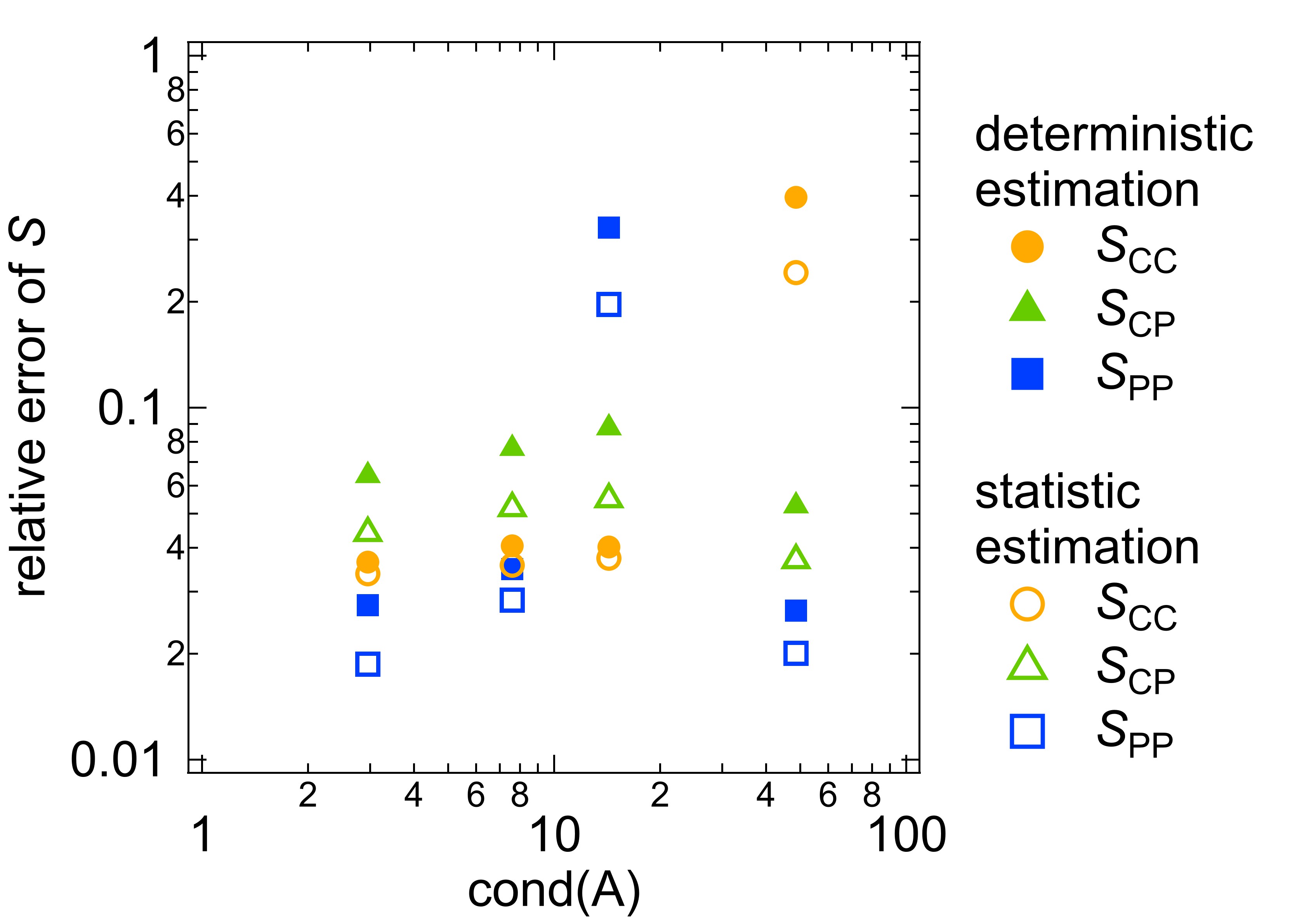}
    \caption{Plot of the relative errors of $S$ versus the condition number of $A$ for the clay/PEG solutions.}
    \label{fig:Fig11_relS_cp}
\end{figure}

\subsection{Error estimation for experimental data of PR solutions}\label{sec:results_PR}

Figs.~\ref{fig:Fig12_I_PR} (a) and (b) show the scattering intensities  \(I(Q)\) and relative errors of \(I(Q)\), \(\sigma/I(Q)\), for the PR solutions with different scattering contrasts~\cite{CVSANS_PR}. The scattering intensities $I(Q)'s$ of the polyrotaxane(PR)  solutions are described by the following equation:
\begin{equation}\label{eq:I_S_PR}
    I(Q) = \Delta\rho_{\rm C}^2 S_{\rm CC}(Q) + \Delta\rho_{\rm P}^2 S_{\rm PP}(Q) + 2\Delta\rho_{\rm C}\Delta\rho_{\rm P} S_{\rm CP}(Q)
\end{equation}
where, \( S_{\rm CC}(Q) \) is the self-term for CDs, \( S_{\rm PP}(Q) \) is the self-term for PEG, and \( S_{\rm CP}(Q) \) is the cross-term between CD and PEG.
For the PR solutions, $S_{11}$, $S_{22}$, $S_{12}$, $\Delta_i\rho_1$, and $\Delta_i\rho_2$ for $i = 1,\dots,N$ in Eq. \eqref{eq:LSE} correspond to  $S_{\rm CC}$, $S_{\rm PP}$, $S_{\rm CP}$, $\Delta\rho_{\rm C}$, and $\Delta\rho_{\rm P}$, respectively. 

\begin{figure}
    \centering
    \includegraphics[width=0.7\linewidth]{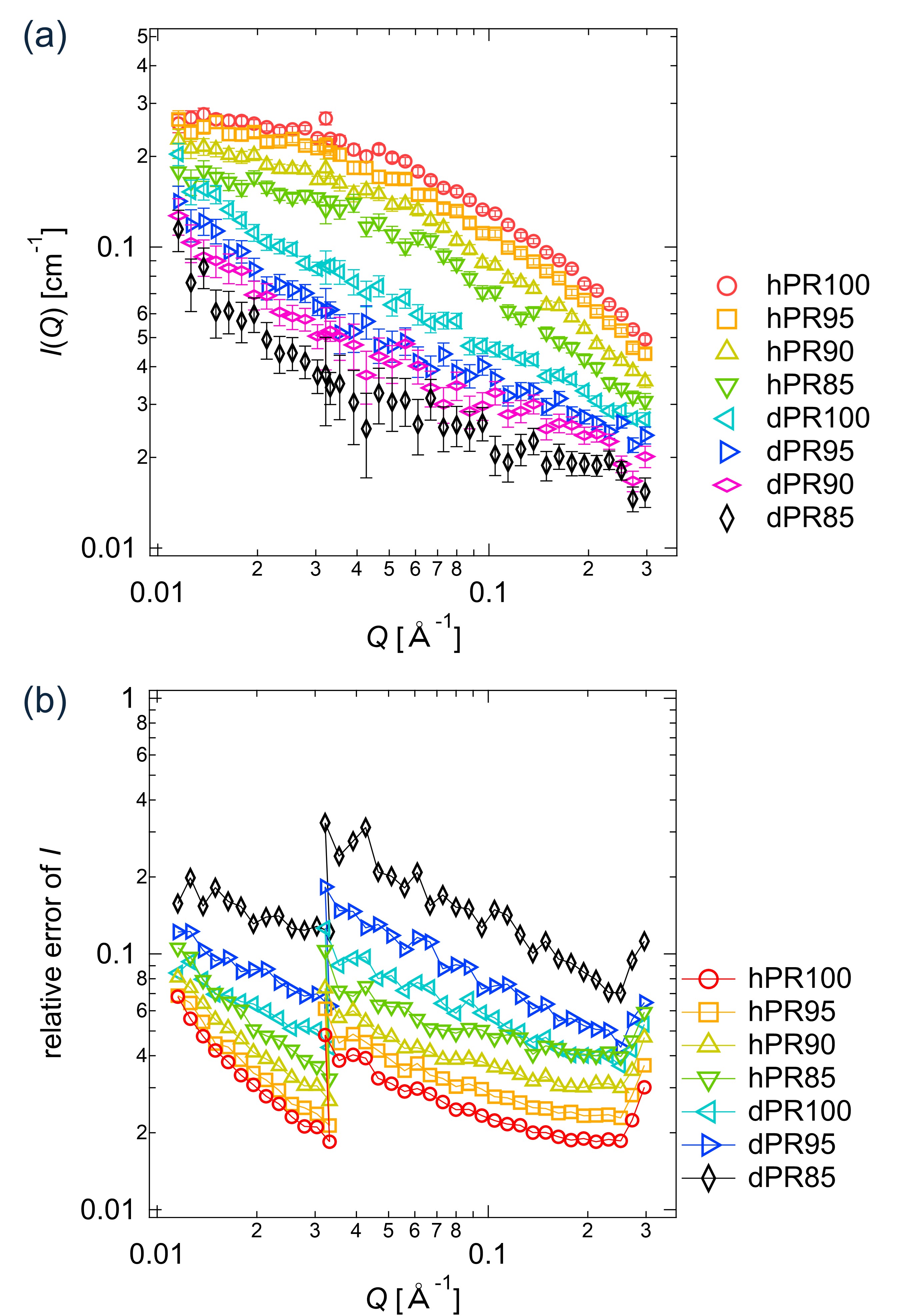}
    \caption{(a) Experimentally measured scattering intensities  \(I(Q)\) with error bars and (b) relative errors of \(I(Q)\) for the PR solutions: hPR100, hPR095, hPR090, hPR085, dPR100, dPR095, dPR090, and dPR085.
    }
    \label{fig:Fig12_I_PR}
\end{figure}

In the same manner as for the CV-SANS data of the clay/PEG solutions, we performed the deterministic and statistical error estimations for the PR solutions. Figs.~\ref{fig:Fig13_PR_miyajjima} and \ref{fig:Fig14_PR_ohbayashi} show the partial scattering functions and their relative errors for the PR solutions using the deterministic and statistical methods, respectively. For the case (a) in Figs.~\ref{fig:Fig13_PR_miyajjima} and \ref{fig:Fig14_PR_ohbayashi} , we used all eight SANS data (four contrasts of h-PR solutions and four contrasts of d-PR solutions). In this case, \( S_{\rm CC}(Q) \) and \( S_{\rm CP}(Q) \) are determined with sufficient accuracy. The positive cross-term $S_{\rm cp}$ represents the topological connection between CD and PEG ~\cite{CVSANS_PR, model_PR}. $S_{\rm cc}$, corresponding to the alignment of CDs on PEG, can be described by a random copolymer model~\cite{CVSANS_PR, model_PR}. However, the relative error of \( S_{\rm PP}(Q) \) is greater than 1, making it difficult to discuss the structure of PEG in PR based on \( S_{\rm PP}(Q) \).

For cases of (b) and (d), where only 4 data of h-PR or d-PR were used, the relative errors of all partial scattering functions exceed 1, indicating that both of h-PR and d-PR data are necessary to reduce the error propagation. As shown in Figs.~\ref{fig:Fig13_PR_miyajjima} (c) and \ref{fig:Fig14_PR_ohbayashi} (c), when two data of h-PR solutions (hPR100 and hPR085) and two data of d-PR solutions (dPR100 and dPR085) were analyzed, the relative errors of the partial scattering functions are almost the same as those obtained from the eight-contrast case (Figs.~\ref{fig:Fig13_PR_miyajjima} (a) and \ref{fig:Fig14_PR_ohbayashi} (a)). 

\begin{figure}
    \centering
    \includegraphics[width=1.0\linewidth]{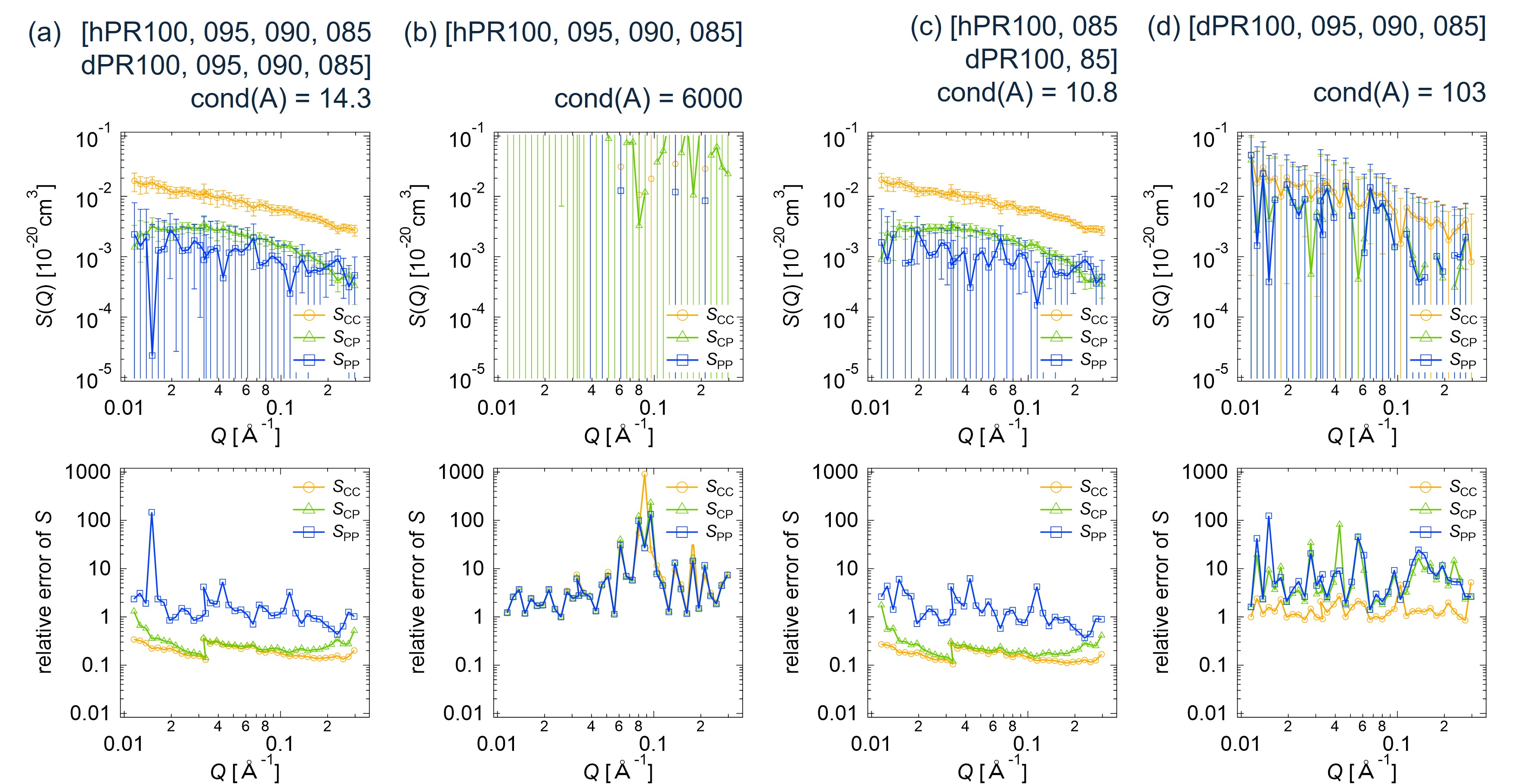}
    \caption{Partial scattering functions with error bars and their relative errors for the PR solutions, obtained by applying the deterministic error estimation method to the different data sets of the SANS scattering intensities: (a) hPR100, hPR095, hPR090, hPR085, dPR100, dPR095, dPR090, dPR085; (b) hPR100, hPR095, hPR090, hPR085; (c) hPR100, hPR085, dPR100, dPR085; (d) dPR100, dPR095, dPR090, dPR085.}
    \label{fig:Fig13_PR_miyajjima}
\end{figure}

\begin{figure}
    \centering
    \includegraphics[width=1.0\linewidth]{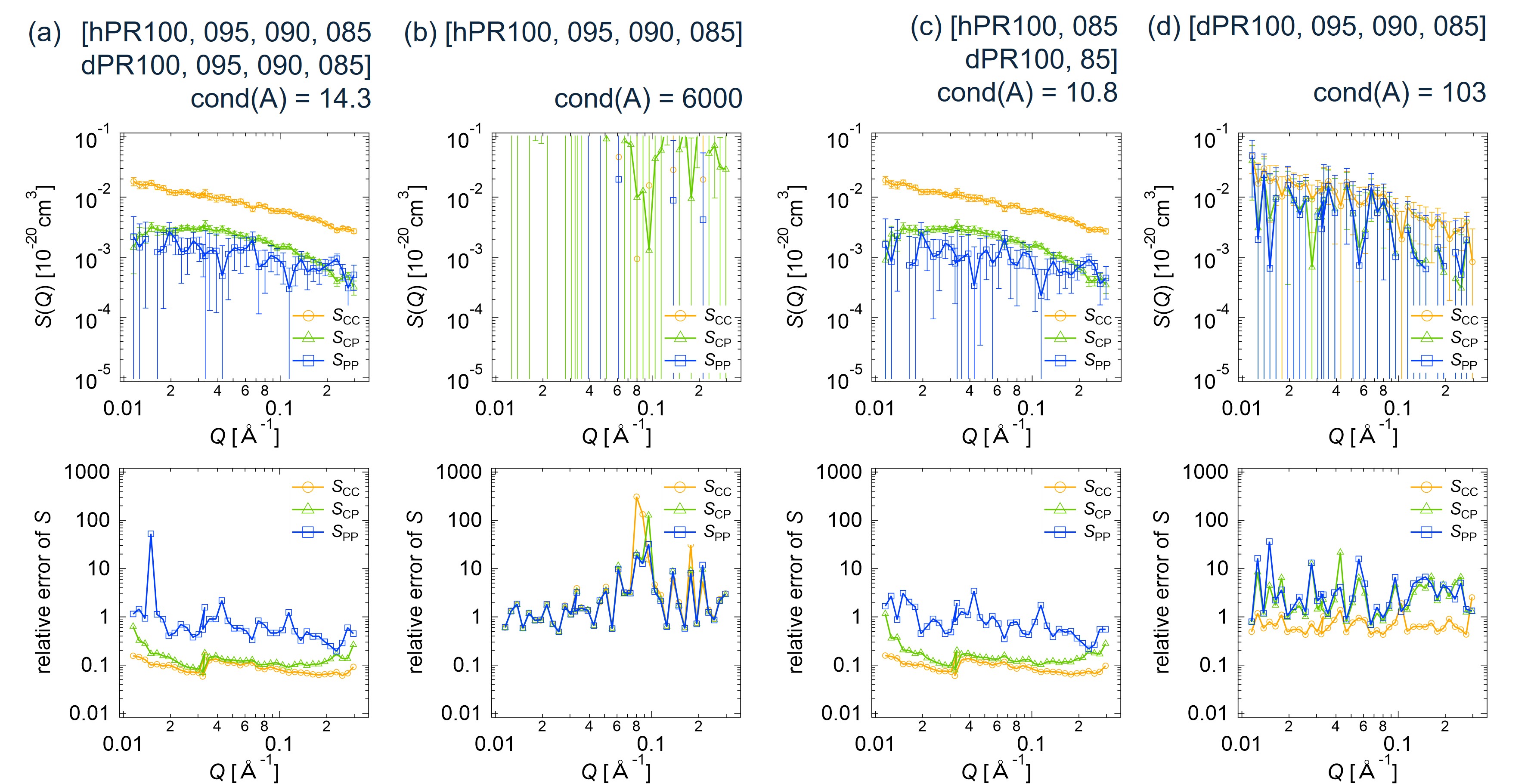}
    \caption{Partial scattering functions with error bars and their relative errors for the PR solutions, obtained by applying the statistical error estimation method to the different data sets of the SANS scattering intensities: (a) hPR100, hPR095, hPR090, hPR085, dPR100, dPR095, dPR090, dPR085; (b) hPR100, hPR095, hPR090, hPR085; (c) hPR100, hPR085, dPR100, dPR085; (d) dPR100, dPR095, dPR090, dPR085.}
    \label{fig:Fig14_PR_ohbayashi}
\end{figure}

In Fig.~\ref{fig:Fig15_relS_PR}, the relative errors of the partial scattering functions $S$ of the PR solutions at $Q = 0.02~\rm{\AA}^{-1}$ are plotted against the condition numbers of $A$ for the different contrast combinations. 
In the same manner as for the core-shell sphere systems and clay/PEG solutions, minimizing the condition number of $A$ is important for more precise determination of the partial scattering functions. The cases (a) and (c) in Figs.~\ref{fig:Fig13_PR_miyajjima} and \ref{fig:Fig14_PR_ohbayashi} correspond to small condition numbers around 10, resulting in the similar error estimation results for the two cases. 

\begin{figure}
    \centering
    \includegraphics[width=0.8\linewidth]{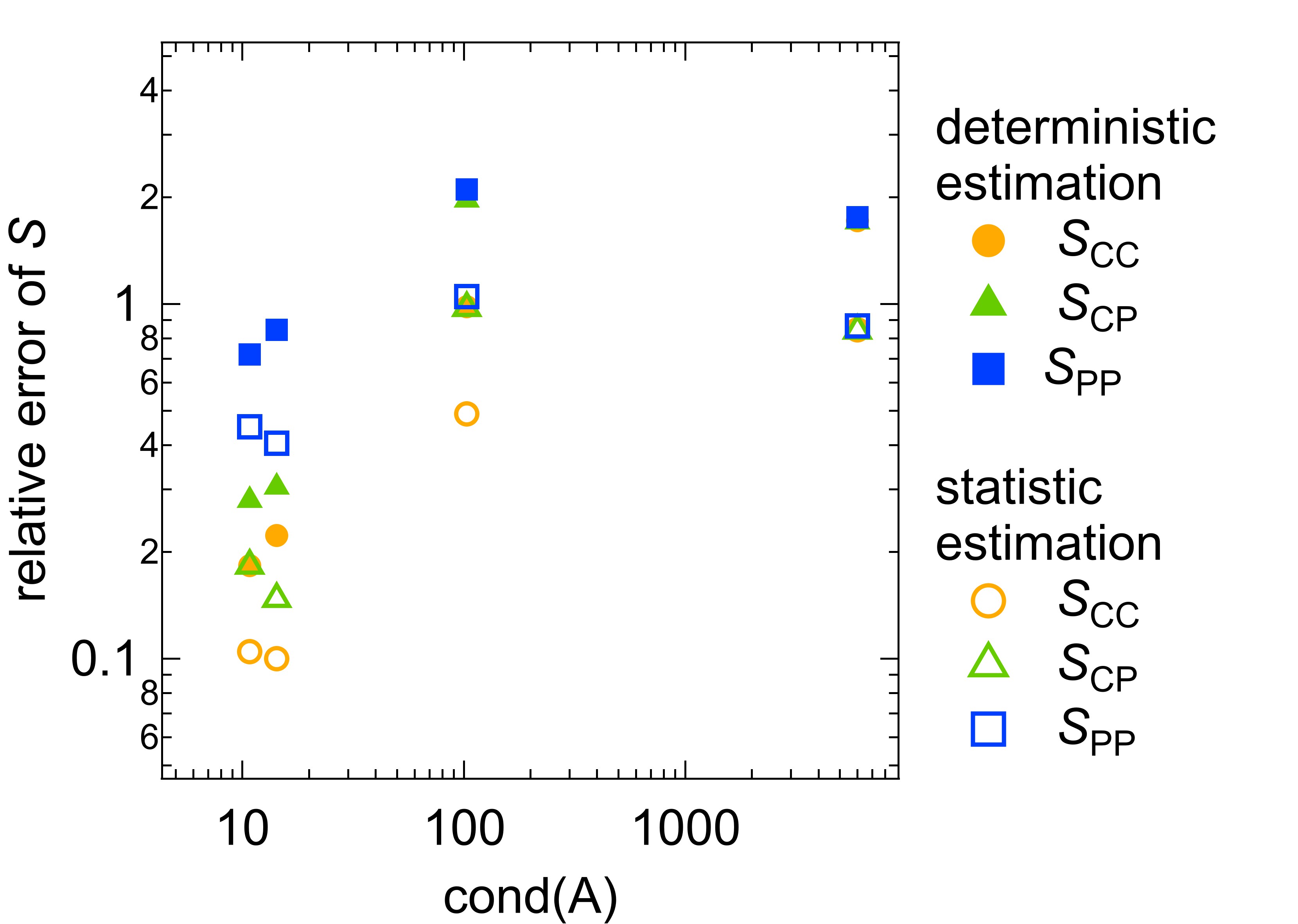}
    \caption{Plot of the relative errors of $S$ versus the condition number of $A$ for the PR solutions.}
    \label{fig:Fig15_relS_PR}
\end{figure}

\section{Conclusion}

In this study, we have established the deterministic and statistical error estimation methods for calculating partial scattering functions from scattering intensities of CV-SANS data. By applying these methods to (i) computational data of a core-shell sphere and experimental CV-SANS data of (ii) clay/polyethylene glycol (PEG) aqueous solutions and (iii) polyrotaxane solutions, we successfully achieved theoretically grounded error estimations of their partial scattering functions.
This approach is valuable for evaluating the reliability of partial scattering functions computed from CV-SANS data. 
The statistical error estimation requires more assumptions than the deterministic error estimation, but the former usually gives sharper results than the latter. Therefore, statistical error estimation is better if the assumptions are valid; otherwise, deterministic is better.

This study also highlighted the significance of the singular values of the matrix $A$ appearing on the right side of the problem~\eqref{SVSANSmodel} in predicting error bars of partial scattering functions.
For both the deterministic and the statistical methods, the inverse of the minimum singular value, $1/ \lambda_{\min}$, provides the scaling factor of absolute errors from CV-SANS measurements to the partial scattering functions, while the condition number, $\lambda_{\max}/\lambda_{\min}$, offers the scaling factor of relative errors.
Because the condition number of $A$ can be calculated only from scattering length densities $\rho_i$, the scaling factor can be estimated before CV-SANS measurements.
Therefore, by minimizing the condition number of $A$, we can optimize the choice of contrasts to reduce the error propagation in the CV-SANS data analysis.

Additionally, the error estimation can be used to reduce the number of samples required and shorten SANS measurement times.

For example, Figs.~\ref{fig:Fig13_PR_miyajjima} and~\ref{fig:Fig14_PR_ohbayashi} demonstrate that only four CV-SANS experimental data (case (c): cond($A$)= 10.8) provide almost the same error bars of \( S_{\rm CC}(Q) \) and \( S_{\rm CP}(Q) \) as all the eight CV-SANS experimental data (case (a): cond($A$)= 14.3); this fact suggests the possibility of reducing experimental costs using condition numbers.

\section{Acknowledgments}
This work is supported by the financial support of the JST FOREST Program (grant number JPMJFR2120) and Data Creation and Utilization-Type Material Research and Development Project grant number JPMXP1122714694. The SANS
experiment was carried out by the JRR-3 general user program managed by the Institute for Solid State Physics, The University of Tokyo (Proposal No. 7607 and No.23559).

\appendix
\section{Proof of Theorems}\label{sec:proofs}

\begin{proof}[Proof of Theorem~\ref{th:AE}]
The assumptions $AS = I$ and $I + \Delta I = A(S + \Delta S)$ imply $A\Delta S = \Delta I$. 
Since $A$ has full column rank, we have $\Delta S = A^+\Delta I$, so $\abs{\Delta I} \le \Delta J$ yields
$$
\abs{\Delta S} = \abs{A^+ \Delta I} \le \abs{A^+}\abs{\Delta I} \le \Delta T. 
$$
\end{proof}

\begin{proof}[Proof of Theorem~\ref{th:Vij}]
Since $|\text{an element of } \bar{\Sigma}| \leq \|\bar{\Sigma}\|$, we will estimate the upper bound of $\|\bar{\Sigma}\|$. By using the minimal singular value, we have the following relation:
\begin{align*}
    \|\bar{\Sigma}\| = \|{(A^T\Sigma^{-1}A)}^{-1}\| = {(\text{minimal singular value of } A^T\Sigma^{-1}A)}^{-1}.
\end{align*}
Since $A^{T}\Sigma^{-1}A$ is symmetric, the minimum singular value can be estimated as follows:
\begin{align*}
    \text{minimum singular value of } A^T\Sigma^{-1}A  &= \min_{\|u\|=1} u^T A^T {(\sqrt{\Sigma})}^{-2} A u  \\
    &=\min_{\|u\|=1} \|{(\sqrt{\Sigma})}^{-1} A u\|^2  \\
    &\geq \min_{\|u\|=1} [ (\min_{i}  \sigma_i^{-2}) \|A u\|^2 ] \\
    &\geq (\min_{i} \sigma_i^{-2}) \lambda_{\min}^2,
\end{align*}
where
\begin{equation*}
    \sqrt{\Sigma} = \begin{bmatrix}
        \sigma_1 &  & \\
        & \ddots & \\
         & & \sigma_N
    \end{bmatrix}.
\end{equation*}
The above inequality provides the desired inequality.

\end{proof}

\section{Condition Numbers of Matrices}\label{sec:condnum}
The condition number of a matrix $A$, denoted by ${\rm cond}(A)$, can be defined in two equivalent ways:
\begin{align*}
{\rm cond}(A) &= \frac{\lambda_{\max}}{\lambda_{\min}} = \|A\|\|A^+\|,
\end{align*}
where $A^{+}$ denotes the Moore-Penrose inverse of $A$, and $\lambda_{\max}$ and $\lambda_{\min}$ denote the maximum and minimum singular values of $A$, respectively.  
For symmetric positive definite matrices, singular values coincide with eigenvalues.
The condition number of $A$ is a measure of the sensitivity of the solution to a linear system involving the coefficient matrix to changes or errors in the input data.
A high condition number indicates that the matrix is close to being singular or ill-conditioned, implying that small changes in the input can lead to large changes in the output.

To understand this intuitively, consider a system $Ax = b$ with a $2\times 2$ symmetric positive definite matrix $A$ and how $A$ transforms vectors.
In Fig. \ref{fig:condition_number}, the square $(0,1)^2 \ni x$ on the left exists in the solution space and is transformed into the parallelogram $Ax$ on the right in the space.
Since $A$ is symmetric positive definite, the singular values $\lambda_{\max}$ and $\lambda_{\min}$ are also eigenvalues. 
The eigenvalues $\lambda_{\max}$ and $\lambda_{\min}$ indicate maximum and minimum stretching, respectively. When $A$ has a large condition number, a small change $\Delta b$ in the transformed space corresponds to a large change $\Delta x$ in the solution space.
\begin{figure}[S1]
\centering
\includegraphics[width=0.8\linewidth]{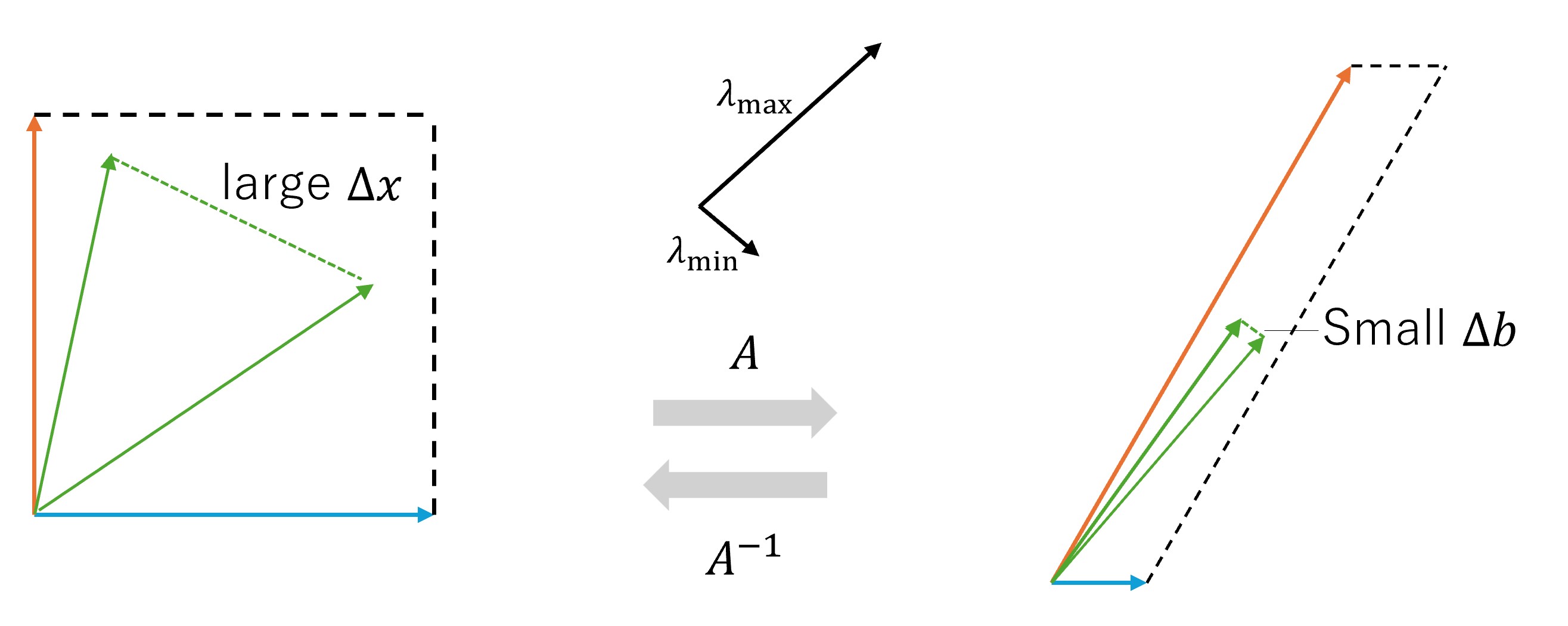}
\caption{Illustration of how the matrix $A$ transforms vectors.}
\label{fig:condition_number}
\end{figure}

For example, consider the symmetric positive definite matrix
\[
A = \begin{bmatrix} 2 & 1.9 \\ 1.9 & 2 \end{bmatrix}
\]
This matrix transforms vectors by stretching them $\lambda_{\max}$ = 3.9 times in one direction, while compressing them to $\lambda_{\min}$ = 0.1 times in another direction. The ratio of these stretching factors defines the condition number ${\rm cond}(A) = \lambda_{\max}/\lambda_{\min} = 39$. 
This large condition number implies that when solving $Ax = b$, a small relative error of 1$\%$ in $b$ can result in a much larger relative error of up to 39$\%$ in $x$.

This concept is generalized to higher dimensions, including the case of least-squares problems where $A$ is not a square matrix. In such cases, instead of eigenvalues, we use singular values of $A$ (which are the square roots of the eigenvalues of $A^TA$) as key indicators.
The following relationships help us understand how relative error propagation is governed by ${\rm cond}(A)$ in more general situations:
\begin{align*}
\|b\| &= \|Ax\|~(\leq \|A\|\|x\|) \quad \Rightarrow \quad\|x\| \geq \|A\|^{-1}\|b\|.
\end{align*}
When considering a perturbation $\Delta b$ in $b$ that leads to a change in the solution:
\begin{align*}
A\Delta x = \Delta b  \quad \Rightarrow \quad \|\Delta x\| \leq \|A^{+}\|\|\Delta b\|.
\end{align*}
Combining these inequalities yields:
\begin{align*}
\frac{\|\Delta x\|}{\|x\|} \leq \underbrace{\|A\|\|A^{+}\|}_{=\,{\rm cond}(A)}\frac{\|\Delta b\|}{\|b\|}.
\end{align*}
Note that when $A$ is a square matrix with an ordinary inverse, $A^{+}$ coincides with $A^{-1}$.
This inequality shows that even when the relative error in $b$ is small, it can be amplified by up to a factor of the condition number in the solution $x$.
Although the geometric interpretation in Fig. \ref{fig:condition_number} is not immediately obvious for nonsymmetric or rectangular matrices, this derivation shows why the condition number still governs the amplification of relative errors in the solution. For detailed theoretical treatment, see \cite{GV}.

\bibliography{iucr.bib}   
\bibliographystyle{iucr}     
\end{document}